\DeclarePairedDelimiter{\floor}{\lfloor}{\rfloor}
\numberwithin{equation}{section}
\titleformat*{\section}{\large \bfseries}
\titleformat*{\subsection}{\normalsize \bfseries}
\titleformat*{\subsubsection}{\small \bfseries}
\newif\ifshow 
\theoremstyle{definition}
\newtheorem{theorem}{Theorem}
\newtheorem{definition}{Definition}
\newtheorem{assumption}{Assumption}
\newtheorem{lemma}{Lemma}
\newtheorem{example}{Example}
\newtheorem{proposition}{Proposition}
\newtheorem{remark}{Remark}
\newtheorem{corollary}{Corollary}
\newcommand\norm[1]{\left\lVert#1\right\rVert}
\begin{document}
\pagenumbering{roman}

\title{\Large \textbf{Estimating Conditional Value-at-Risk with Nonstationary Quantile Predictive Regression Models}\thanks{\textbf{Article history:} First draft 24 April 2022. Second draft 29 August 2023. Previously titled: "\textit{A Doubly IVX Corrected Estimator for Quantile Predictive Regressions with Nonstationary and Generated Regressors}". } }

\author{\textbf{Christis Katsouris}\thanks{Dr. Christis Katsouris is currently a Postdoctoral Researcher at the Faculty of Social Sciences, University of Helsinki, P.O. Box 17, Arkadiankatu 7 Helsinki, FI-00014 Finland. E-mail: \textcolor{blue}{\texttt{christis.katsouris@helsinki.fi}} } \\ \textit{University of Southampton}\\ $\&$  \textit{University of Helsinki} \\ \\ Working paper}

\date{\today}

\maketitle

\begin{abstract}
\vspace*{-0.25 em}
This paper develops an asymptotic distribution theory for an endogenous instrumentation approach in quantile predictive regressions when both generated covariates and persistent predictors are used. The generated covariates are obtained from an auxiliary quantile predictive regression model and the statistical problem of interest is the robust estimation and inference of the parameters that correspond to the primary quantile predictive regression in which this generated covariate is added to the set of nonstationary regressors. We find that the proposed doubly IVX corrected estimator is robust to the abstract degree of persistence regardless of the presence of generated regressor obtained from the first stage procedure. The asymptotic properties of the two-stage IVX estimator such as mixed Gaussianity are established while the asymptotic covariance matrix is adjusted to account for the first-step estimation error. 
\\

\textbf{Keywords:} risk measures; generated regressors; quantile regression; persistence.
\\

\textbf{JEL} Classification: C12, C22
\end{abstract}

\newpage 

\setcounter{page}{1}
\pagenumbering{arabic}

\section{Introduction}

Risk monitoring of financial time series as well as portfolio performance measures when obtaining optimal portfolios can lead to inaccurate results in the case of non-Gaussian (asymmetric) return distributions. Thus, in order to better measure downside risk in a non-Gaussian setting usually the risk measures of Value-at-Risk and Expected-Shortfall are used to quantify uncertainty (see, \cite{lin2023portfolio}). Moreover, estimating risk measures such as the Conditional-Value-at-Risk, ($\mathsf{CoVaR}$),  entails a generated regressor problem, which occurs when an explanatory variable is constructed using structural econometric methods. We propose an estimation method in nonstationary time series regression  models with generated covariates, which has not seen much attention in the literature so far and thus requires the development of our own asymptotic theory. 

Inferential procedures based on generated regressors are studied in linear and conditional quantile models as in \cite{chen2021quantile} (see, \cite{pagan1984econometric}, \cite{doran1983relative}, \cite{hoffman1987two},  \cite{oxley1993econometric}, \cite{dufour2001finite} and \cite{chen2023standard}), although in the case of risk measures such as $\mathsf{VaR}$ and $\mathsf{CoVaR}$ a conditional quantile specification is required. The conventional approach for estimating these risk measures is to employ stationary quantile regressions (as in \cite{AB2016covar} and \cite{hardle2016tenet}). Several studies in the literature consider the estimation of risk measures using stationary time series environments either with parametric (e.g., see \cite{he2020inference}) or semiparametric approaches (e.g., see, \cite{wang2016conditional}). In this study,  we extend the econometric specifications for the joint estimation of the pair $( \mathsf{VaR}, \mathsf{CoVaR} )$ as discussed in the study of \cite{katsouris2023statistical} (see also \cite{katsouris2021optimal, katsouris2023quantile}), to the case of nonstationary quantile predictive regression models (see, \cite{lee2016predictive}), which implies using a local-to-unity parametrization for modelling the unknown form of persistence.
    
Specifically,  \cite{katsouris2023statistical} considers the statistical estimation for covariance-type matrices  with tail forecasts using node-specific quantile predictive regressions, extending the framework of \cite{katsouris2021optimal} who discusses identification and estimation issues of conditional quantile risk measures. However, both of these studies operate under the assumption of time series stationarity, thereby relying on OLS-based estimators (see, \cite{patton2019dynamic}). Thus, a suitable estimation approach under 
the presence of both a generated covariates and persistent regressors, for the joint estimation of the risk measure pair of $( \mathsf{VaR}, \mathsf{CoVaR} )$, remains an open problem. Several studies in the literature develop estimation and inference methods in nonstationary predictive regressionss robust to the unknown form of persistence, with a notable approach being the IVX filtration of \cite{phillipsmagdal2009econometric}. Motivated by the aforementioned issues, our research objective is to study the large-sample theory of a doubly IVX corrected estimator in nonstationary quantile predictive regression for the purpose of estimating risk measures. To the best of our knowledge, our approach that incorporates the time series properties of regressors (nonstationarity) when estimating risk measures in a low dimensional setting (small number of regressors relative to the sample size), is a novel contribution to the literature.

\newpage

As a result, when developing robust estimation and inference methods in econometric environments with generated regressors, it is crucial to obtain analytical and computational feasible expressions for the asymptotic variance of these potentially semiparametric estimators. In particular, \cite{rilstone1996nonparametric} investigates the aspect of generated covariate in a nonparameteric stationary time series environment.  Moreover, \cite{hahn2013asymptotic} study the asymptotic distribution of three-step estimators of a finite-dimensional parameter vector where the second step consists of one or more nonparametric regressions on a regressor that is estimated in the first step. Specifically, the authors using the  path-derivative method of \cite{newey1994asymptotic}, derive the contribution of the first-step estimator to the influence function is used to investigate the dual role that the first-step estimator plays in the second-step nonparametric regression. Moreover, \cite{bhattacharya2020quantile} consider the aspects of estimation and inference in stationary quantile regression models with generated covariates. Recently, \cite{ghosh2022doubly} develop a framework for semiparametric estimation where a doubly robust estimating function for a low-dimensional parameter is available, depending on two working models. Therefore, our framework proposes a computationally tractable two-step algorithm and an asymptotic theory analysis for the doubly IVX corrected estimator (which is found to robustify inference to the unknown form of persistence\footnote{The unknown form of persistence in time series data is modelled via the the local-to-unity parametrization introduced by the seminal study of \cite{phillips1987time, phillips1987towards} who established a unified framework for unit root and nearly integrated processes in time series regression models.}), when estimating risk measures under the assumption of statistical elicitability (see, \cite{fissler2016higher}). 

Our practical two stage estimation procedure permits to obtain estimates for the forecasts of the pair $( \mathsf{VaR}, \mathsf{CoVaR} )$ based on the econometric specifications of \cite{AB2016covar} (see also \cite{hardle2016tenet}), but under the assumption of time series nonstationarity using the local-to-unity parametrization to capture nearly nonstationary regressors as in \cite{lee2016predictive} and \cite{fan2019predictive}. During the first-stage procedure using the nonstationary quantile predictive regression model (QPR) and the local-to-unity parametrized regressors, we obtain quantile dependent model estimates that correspond to the one-period ahead forecasted $\mathsf{VaR}$ risk measure. During the second stage estimation procedure, the generated regressor that corresponds to the $\mathsf{VaR}$ is employed as an additional covariate to a different QPR model which is employed to obtain the one-period ahead forecasted $\mathsf{CoVaR}$ risk measure. Furthermore, because of the well-known \textit{endogeneity problem} (see, \cite{cai2014testing}) that appears due to the correlation between the two innovation sequences of the predictive regression and the autoregressive equation, parameter estimation is achieved using the IVX filtration (see, \cite{phillipsmagdal2009econometric} and \cite{kostakis2015Robust}), which is robust to  the unknown persistence properties of regressors. In particular, the IVX estimation approach is implemented to both stages of the procedure, thereby ensuring identification and joint estimation of the risk measures of $\mathsf{VaR}$ and $\mathsf{CoVaR}$\footnote{Although statistical comparisons between the assumption of stationary regressors against regressors generated via a LUR parametrization can be constructed more formally using goodness-of-fit testing (e.g., see \cite{chan2013marked}); we leave these considerations for future research. On the other hand, more recently \cite{hoga2023estimation} propose a framework for the estimation risk in extreme systemic risk forecasts for stationary time series data.}.

\newpage

We focus on the IVX estimators in each estimation stage, since it is well-known to be robust to the abstract degree of persistence (e.g., see \cite{lee2016predictive}). As a result, the doubly corrected IVX estimator (which is the IVX estimator obtained from the second step estimation procedure), is the main parameter of interest in terms of statistical inference and asymptotic theory. Specifically, we establish the asymptotic properties of the doubly corrected IVX estimator which verify the mixed Gaussianity property of the limiting distribution regardless of the degree of persistence in both estimation stages. Moreover, we consider a suitable correction to the expression of the asymptotic variance-covariance matrix in the second estimation step, which is adjusted in order to account for the first-step estimation error that produces the generated regressor under nonstationarity. 

Regarding the statistical properties of parameters obtained from the first-stage procedure those are employed for the $\mathsf{VaR}$ risk measure forecast and in practice parameter consistency is not violated (see, \cite{AB2016covar}) regardless of the LUR parametrization. In addition, the particular generated regressor that corresponds to fitted values of a stationary regressand can be treated as the stationary component of the vector of regressors used during the second stage procedure ($\mathsf{CoVaR}$). However, the generated regressors are included in the QPR model of the second estimation step with some sampling error, which introduces additional noise into the asymptotic variance-covariance matrix of the model coefficients. Therefore, the sampling error from the first stage contaminates the second stage estimation procedure (see, \cite{chen2021quantile}). Furthermore, during the second-stage ($\mathsf{CoVaR}$ estimation), model estimates are obtained using the IVX estimator in which case the generated regressor is exogenously determined, this implies that it can only be assumed to be stationary (although is modelled as the same lag as the regressand). As a result, the usual way of calculating the QR variance-covariance matrix fails to account for the additional source of error and an adjustment is required. To sum-up, the generated regressor is a proxy of the estimated $\mathsf{CoVaR}$ when the nonstationary properties of regressors are included, which implies that this proxy is treated as a nonstochastic stationary regressor in a quantile predictive regression model with nonstationary regressors. Thus, accounting for the presence of generated regressor bias, inference is robustify against both the abstract degree of persistence that these nonstationary regressors exhibit as well as to the presence of any generated regressor bias.

Recall that, the conditional quantile function $F^{-1}_{ \textcolor{blue}{ y_{(j)t} } } \left( \uptau \big| \boldsymbol{x}_{t-1} , \textcolor{blue}{ y_{(i)t} } \right)$ is the $\mathsf{VaR}_{ (j) t} \left( \uptau \right)$ conditional on $\boldsymbol{x}_{t-1}$ and $\textcolor{blue}{ y_{(i)t} }$, where $F(\cdot)$ is the conditional distribution function and $F^{-1} (\cdot)$ is the conditional quantile function of the underline distribution. Then, by conditioning on $\textcolor{blue}{ y_{(i)t} } =\mathsf{VaR}_{ (i) t} \left( \uptau \right)$, we obtain the $\mathsf{CoVaR}_{ (j,i) t} \left( \uptau \right)$ defined as (see, \cite{AB2016covar} and \cite{hardle2016tenet})
\begin{align}
\label{A}
\mathsf{CoVaR}_{ (j,i) t} \left( \uptau \right) 
&:= 
\underset{  \mathsf{VaR}_{ (j) t} \left( \uptau \right) }{ \mathsf{inf} } \ \left\{ \mathbb{P} \bigg( \textcolor{blue}{ y_{(j)t} } \bigg| \left\{ \boldsymbol{x}_{t-1} ,  \textcolor{blue}{ y_{(i)t} } = \mathsf{VaR}_{ (i) t} \left( \uptau \right) \right\} \bigg) \geq \uptau \right\} 
\end{align}
Equivalently, it holds that 
\begin{align}
\label{B}
\mathsf{CoVaR}_{ (j,i) t} \left( \uptau \right) \equiv F^{-1}_{ \textcolor{blue}{ y_{(j)t} } } \left( \uptau \big| \boldsymbol{x}_{t-1} , \textcolor{blue}{ y_{(i)t} } =  \mathsf{VaR}_{ (i) t} \left( \uptau \right) \right), \ \ \  \uptau \in (0,1).
\end{align}

\newpage 

Thus, we focus on estimating systemic risk measures conditioned on time series nonstationarity (persistent data) using the LUR parametrization. The nonstationary quantile regression specifications are employed for forecasting the pair $( \mathsf{VaR}, \mathsf{CoVaR} )$, based on the definitions of \cite{AB2016covar} and the statistical elicitability property of these risk measures (see, \cite{patton2019dynamic}, \cite{frongillo2021elicitation} and \cite{fissler2016higher}). Although, the loss function which characterizes M-estimators allows to obtain model estimates, thereby directly affecting the robustness of forecasts, we do not consider the estimation risk in systemic risk forecasts (see,   \cite{dimitriadis2022characterizing}, \cite{xu2022prediction}, \cite{hoga2023estimation} and \cite{katsouris2023statistical}).

\subsection{Main contributions}

\begin{itemize}

\item[\textit{(i)}.]   Our first main contribution to the literature is to introduce dynamic co-quantile predictive regression models with nonstationary regressors ($\mathsf{CoQR-LUR}$). Our generalized class of estimators is employed to jointly estimate the vector $(\mathsf{VaR}, \mathsf{CoVaR})$ by incorporating the persistence properties of regressors. Our proposed general class of models nests both the predictive co-quantile regressions in the spirit of \cite{AB2016covar} as well as the quantile predictive regressions with persistence covariates in the spirit of \cite{lee2016predictive}.      

\item[\textit{(ii)}.] Our second main contribution is to propose an estimator for the model parameters and derive its large sample properties. The main technical difficulty to overcome in developing related asymptotic theory, is that the joint estimation of the risk measures of $( \mathsf{VaR}, \mathsf{CoVaR} )$ fails to be elicitable, such that no real-valued scoring function exists that is uniquely minimized by the true values of these risk measures (see, \cite{dimitriadis2022dynamic}). This pitfall suggests the implementation of a two-step M-estimator  (see, also \cite{martins2018nonparametric}, \cite{patton2019dynamic} and \cite{fang2023simple}). Our proposed estimation approach extends to nonstationary data which is more general than current estimation methods. 

\item[\textit{(iii)}.] We focus on showing consistency and Asymptotic Mixed Gaussianity of the proposed doubly IVX robust estimator and propose a valid inference procedure based on consistent estimation of the asymptotic variance-covariance matrix. Our asymptotic theory analysis includes the use of the standard local-to-unity asymptotics found in relevant studies (see, \cite{lee2016predictive} and \cite{cai2023new}), but also considers the adjustment for the variance-covariance matrix to account for the estimation uncertainty of standard errors during the first stage procedure\footnote{It is worth pointing out that formal testing methods for determining the joint statistical significance of these nearly integrated regressors as predictors for systemic risk is a novel aspect in the literature to the best of our knowledge when nonstationary predictive regressions include a generated covariate.}.    

\end{itemize}
The use of endogenously generated instruments (IVX filter\footnote{The ideal IV has to meet two requirements: it is correlated with the endogenous regressor via an explainable and validated relationship (relevance relationship), yet uncorrelated with the structural error (exclusion restriction).}) is employed for the purpose of filtering out the unknown form of persistence in the regressors of each quantile predictive regression used in the two-stage estimation procedure.

\newpage 

Regarding identification issues of the risk measure pair $( \mathsf{VaR}, \mathsf{CoVaR})$ are discussed in several studies such as \cite{AB2016covar}, \cite{hardle2016tenet}, \cite{patton2019dynamic}. Moreover,  \cite{fissler2023backtesting} mention that although there is no real-valued scoring function associated with risk measure pairs and propose to consider a $\mathbb{R}^2-$valued scoring function in the closely related context of forecast evaluation. Nevertheless, our estimation method does not rely on the score function estimation approach, thereby the statistical elicitability property of the risk measure pair facilitates econometric estimation.  We also avoid using copula correction methods (see,  \cite{yang2022addressing} among others), since within the setting of nonstationary predictive regression models, the IVX filtration of \cite{phillipsmagdal2009econometric} has been found to be robust to abstract degree of persistence. In addition to the proposed two-stage algorithm for the joint estimation of the risk measure pair, as well as  asymptotic theory analysis of estimators and their covariance matrices, our framework allows to study  statistical inference issues which we leave for future research\footnote{The related statistical inference aspects we refer to here is the \textit{subvector inference problem} which is commonly examined in the linear instrumental variables regression model. In that case, our objective is to develop a powerful test such as the asymptotic null rejection probability is controlled uniformly over a parameter space. The inference problem  consists of making inference on the coefficients of structural models, but in the context of systemic risk modelling. Such statistical problems usually give attention to aspects such as developing robust methods in the presence of weak instruments or powerful subvector testing in IV regressions under conditional heteroscedasticity; we give attention on the simultaneous modelling of risk measures which involve the use of a generated covariate. Therefore,  drawing inference on individual coefficients of the parameter vector of the structural equation for estimating the $\mathsf{CoVaR}$ risk measure, motivates us to propose a subvector testing approach which is based on marginal conditional regression estimation and testing with nonstationary regressors (see, \cite{hsiao1997statistical} and \cite{de2001nonlinear}). }.

We propose a practical two-stage estimation procedure\footnote{We study a broad class of semiparametric problems, where a doubly robust estimating function $h_{\uptau}( U; \theta, \theta_1, \theta_2 )$, where $\uptau \in (0,1)$ for the parameter of interest $\theta$ is available as follows. Consider two working models denoted with $g_1( \boldsymbol{x}_{t-1} ; \theta_1 )$ and $g_{2|1} ( \boldsymbol{x}_{t-1} ; \theta_2 )$ for unknown functions $y_{1t} = g_1^*( \boldsymbol{x}_{t-1} )$ and $y_{2t} = g_2^*( \boldsymbol{x}_{t-1} )$. Furthermore, the estimating function (in our case it represents the tail estimates for the $\mathsf{CoVaR}$ risk measure), $h_{\uptau}( . ; . )$ is unbiased such that $\mathbb{E} \bigg[ h_{\uptau} \bigg( U; \theta, \theta_1, \theta_2 \bigg) \bigg] = 0$, when $\theta$ is set to the true value $\theta^{*}$, and either $\theta_1$ or $\theta_2$, but not necessarily both, is set to the true value $\theta_1^{*}$ or $\theta_2^{*}$ defined respectively such that $g_1^*( \boldsymbol{x}_{t-1} ) \equiv g_1( \boldsymbol{x}_{t-1} )$ or  $g_2^*( \boldsymbol{x}_{t-1} ) \equiv g_2( \boldsymbol{x}_{t-1} )$ if model $g_1 ( \boldsymbol{x}_{t-1} ; \theta_1 )$ or $g_2 ( \boldsymbol{x}_{t-1} ; \theta_2 )$ is correctly specified. In general, doubly robust estimation using the estimating function $h_{\uptau}( U; \theta, \theta_1, \theta_2 )$ consists of two stages: some estimators $\left( \hat{\theta}_1, \hat{\theta}_2 \right)$ are first defined, and then $\hat{\theta}$ is defined by solving the estimating equation $\tilde{\mathbb{E} } \big[ h_{\uptau} \big( U; \theta, \hat{\theta}_1, \hat{\theta}_2 \big) \big] = 0$, where $\tilde{\mathbb{E} } [ \cdot ]$ denotes a sample average.} for estimating the model parameters of the quantile predictive regression with nonstationary regressors. Then, the second step employs the generated regressors in the form of fitted values, along with a different set of nonstationary regressors to a quantile predictive regression model. As a result, the doubly IVX corrected estimator, is the IVX-based estimator corresponding to the second step estimation procedure and is the main parameter of interest in terms of statistical inference and limit theory. Since the generated regressor is assumed to be estimated using a different functional form as well as nonstationary regressors, then any dependence structure between the generated regressors is captured by the variance covariance matrix of the second-step estimation procedure. Within our econometric environment the proposed two-stage (semiparametric) estimation method, depends on possibly nonstationary regressors, therefore relevant parameter restrictions on $\vartheta_1$ and $\vartheta_2$ (see, also \cite{katsouris2021optimal, katsouris2023statistical}), which correspond to the "working" models, are necessary to ensure robust statistical inference. 

\newpage

For notational convenience consider the following two functional forms: 
\begin{align}
g^{*}_1 ( \boldsymbol{x}_{t-1}  ) 
&\equiv
g_1 \big( \boldsymbol{x}_{t-1}  ; \boldsymbol{\beta}_1 \big) = \psi_{(1) \uptau } \left( \boldsymbol{\beta}_1^{\top} \xi ( \boldsymbol{x}_{t-1} ) \right),
\\
g^{*}_2 ( \boldsymbol{x}_{t-1}  ) 
&\equiv
g_2 \big( \boldsymbol{x}_{t-1}  ; \boldsymbol{\beta}_2 \big) = \psi_{(2) \uptau } \left( \boldsymbol{\beta}_2^{\top} \xi ( \boldsymbol{x}_{t-1} ) \right),
\end{align}
where $\xi ( \boldsymbol{x}_{t-1} )$ is $p \times 1$ vector of known functions such as $\xi ( \boldsymbol{x}_{t-1} ) = \big( 1, \boldsymbol{x}_{t-1}^{\top} \big)$, and $\boldsymbol{\beta}_1$ and $\boldsymbol{\beta}_2$ are vectors of unknown coefficients, which correspond to joint estimation as the two models in \cite{AB2016covar}.  Under time series nonstationarity, the two parameter vectors are estimated via the IVX filtration of \cite{phillipsmagdal2009econometric}. Our implementation is, however, a plug-in estimation approach and a novel methodology to existing approaches discussed in the literature. Therefore, we consider the asymptotic theory analysis for $\big( \hat{\boldsymbol{\beta}}_1, \hat{\boldsymbol{\beta}}_2 \big)$ and $\widehat{\vartheta} \big( \hat{\boldsymbol{\beta}}_1, \hat{\boldsymbol{\beta}}_2 \big)$,  where $\boldsymbol{\beta}_1$ and $\boldsymbol{\beta}_2$ are fixed-dimensional as the sample size $n$ grows. In particular, we assume that $\hat{\boldsymbol{\beta}}_1$ is consistent for $\boldsymbol{\beta}^{*}_1$ when the first working model is correctly specified, and $\hat{\boldsymbol{\beta}}_2$ is consistent for $\boldsymbol{\beta}^{*}_2$ if the second working model is correctly specified (absence of functional form misspecification). 

Assume the existence of any two pair of random variables $\left\{ y_{1t}, y_{2t} \right\}$ and a set of predictors $\boldsymbol{x}_t$. Another way of viewing our econometric environment is to consider the example of two-countries (or firms across two countries) where we have a balanced set of $p$ nonstationary macroeconomic variables for both economies (or firms), respectively. In particular, let $\boldsymbol{x}_{1t}$ be the $( p \times 1)$ vector of domestic variables and $\boldsymbol{x}_{2t}$ be the $( p \times 1)$ vector of the same variables for the foreign country where each of these possible nonstationary processes have the following LUR parametrizations
\begin{align}
\boldsymbol{x}_{1t} &= \boldsymbol{R}_{1n} \boldsymbol{x}_{1t-1} + \boldsymbol{u}_{1t}, \ \boldsymbol{R}_{1n} = \left( \boldsymbol{I}_p - \frac{ \boldsymbol{C}_{1p} }{n}   \right).
\\
\boldsymbol{x}_{2t} &= \boldsymbol{R}_{2n} \boldsymbol{x}_{2t-1} + \boldsymbol{u}_{2t}, \ \boldsymbol{R}_{2n} = \left( \boldsymbol{I}_p - \frac{ \boldsymbol{C}_{2p} }{n}   \right).
\end{align} 
Moreover, additionally to these nonstationary regressors that are equation-specific, a generated covariate is employed as proxy for the risk measure of $\mathsf{VaR}$ that represents a quantile risk event of the  macroeconomic conditions of the foreign economy when predicting the current time conditions of the home economy. Our estimation approach combines both the IVX estimator of \cite{phillipsmagdal2009econometric} (to filter out persistence present in the nonstationary regressors $\boldsymbol{x}_{1t}$ and  $\boldsymbol{x}_{2t}$), as well as inspired by the residual augmentation approach of \cite{demetrescu2020residual} we self-instrument the proxy variable (generated covariate) under time series nonstationarity (see, also \cite{demetrescu2014enhancing} and \cite{breitung2015instrumental}).  As a result, when developing the asymptotic theory of the (long-run) quantile-dependent parameter we use a self-instrumentation approach for the generated regressors that correspond to the systemic risk proxy of each economy (as the joint estimation approach of \cite{AB2016covar} and \cite{hardle2016tenet}).

\newpage

Estimation of the $\mathsf{CoVaR}$ risk measure is conditioned on the estimated $\mathsf{VaR}$, and relies on a quantile predictive regression, with both nearly integrated regressors and a generated regressor (proxy for estimated $\mathsf{VaR}$ under nonstationarity). Therefore, in the presence of this additional covariate, there are more regressors than available instruments. To overcome this issue, \cite{demetrescu2014enhancing} propose a framework for inference in an augmented predictive regression. In that case, the author propose to instrument that additional covariate with the same process itself. We follow a similar approach in our setting although with major necessary modifications. 

\subsection{Estimation Algorithm}

Our proposed two-stage estimation procedure is summarized as below:

\begin{itemize}

\item[\textbf{Step 1.}] (Estimation of $\mathsf{VaR}$ under time series nonstationarity). Let $\boldsymbol{\beta}_1 (\uptau)$  be the parameter vector of interest (without IVX filtration yet). Then, the QR-estimator is (see, \cite{lee2016predictive}) 
 \begin{align}
\widehat{ \boldsymbol{\beta} }_1 (\uptau) 
= 
\underset{ \beta \in \mathbb{R}^p }{ \mathsf{arg \ min} }  \sum_{i=1}^n \rho_{\uptau} \bigg( y_{1t} - \boldsymbol{x}_{1t-1}^{\prime} \boldsymbol{\beta}_1 \bigg).
\end{align} 
where the check function is defined as $\rho_{\uptau} ( \mathsf{u}  ) :=  \mathsf{u} \big[ \uptau - \mathbf{1} \left\{ \mathsf{u} \geq 0 \right\} \big]$ and $\boldsymbol{x}_t$ are LUR regressors. 

During this step in practice we replace the QR-estimator with the QR-IVX estimator from \cite{lee2016predictive} due to the presence of nonstationary regressors. Then, the estimated one-period ahead risk measure of $\mathsf{VaR}$ is obtained, denoted by $\hat{y}_{1t}^{\mathsf{VaR}}$. In the case of $N$ cross-sectional units all these generated covariates (under time series nontationarity), shall be computed\footnote{Note that the main idea of modelling systemic risk based in this approach is proposed by \cite{katsouris2021optimal} and \cite{katsouris2023statistical} which is also related to various studies in the literature of financial connectedness (see, \cite{billio2012econometric}, \cite{diebold2012better} and \cite{diebold2014network} among others.}.

\item[\textbf{Step 2.}] (Estimation of $\mathsf{CoVaR}$ under time series nonstationarity). Compute the unknown parameter vector $\boldsymbol{\beta}^{\star} _2 (\uptau)$  from the following QR model 
\begin{align}
\widehat{ \boldsymbol{\beta} }_2^{\star} (\uptau) 
= 
\underset{ \boldsymbol{\beta}_2^{\star} \in \mathbb{R}^{p + 1} }{ \mathsf{arg \ min} }  \sum_{i=1}^n \rho_{\uptau} \bigg( y_{2t} - \tilde{\boldsymbol{x}}_{2t-1}^{\prime} \boldsymbol{\beta}_2^{\star} \bigg), \ \ \ \tilde{\boldsymbol{x}}_{2t-1} = \big[ \boldsymbol{x}_{2t-1} \ \ \ \widehat{y}^{\mathsf{VaR}}_{1t}  \big].
\end{align}
where $\boldsymbol{\beta}_2^{\star}$ corresponds to the parameter vector of the nonstationary regressors of the second-stage as well as the unknown parameter that correspond to the fitted-values of the $\mathsf{VaR}$\footnote{In the framework of \cite{hardle2016tenet}, this coefficient corresponds to the unknown model parameter $\hat{\beta}_{j|i}$ in their expression (6). See also expressions (7) and (8) of \cite{hardle2016tenet} that correspond to the econometric specifications used to obtain the estimates of the risk measure pair under the assumption of stationary regressors.} that have been estimated during the first-stage estimation procedure. 
\end{itemize}
Recall that the conditional quantile can be consistently estimated  based on the quantile-dependent model parameter $\widehat{ \boldsymbol{\beta} }_2^{\star} (\uptau)$ such that $\widehat{Q}_{\uptau} (  \uptau |  \boldsymbol{x}_{2t-1},  \widehat{y}^{\mathsf{VaR}}_{1t} ) = \tilde{\boldsymbol{x}}_{2t-1}^{\prime
}  \widehat{ \boldsymbol{\beta} }_2^{\star} (\uptau)$ (see, \cite{koenker1978regression}). In a similar spirit the corresponding IVX estimators are obtained. 

\newpage

Therefore, one uses the estimates of $\boldsymbol{\theta}_j$, denoted by $\widehat{\boldsymbol{\theta}}_j$, in the first step, to obtain the generated regressor $\widehat{\boldsymbol{x}}_t$. Our aim is to demonstrate that our proposed methodology achieve rate and model double robustness simultaneously, provided that the parameter $\boldsymbol{\theta}$ satisfies certain regularity conditions. In the current study the doubly IVX corrected estimation approach corresponds to conditional quantile specification forms for nonstationary data. 

The remained of this paper is organized as follows. In Section \ref{Section2}, we introduce the econometric model and the main assumptions of our framework. In  Section \ref{Section3}  we present our proposed doubly corrected estimation method in the case of conditional mean functional forms while in Section  \ref{Section4} discusses the conditional quantile case which corresponds to the estimation of the risk measures under time series nonstationarity. We conclude in Section \ref{Section5}. 

Throughout the paper, for any vector $\boldsymbol{a}$, we denote with $\norm{ \boldsymbol{a} }_1 = \sum_j | a_j |$ and with $\norm{ \boldsymbol{a} }_2 = \sqrt{ \sum_j | a_j |^2 }$. For any real arbitrary matrix $\boldsymbol{A}$, its Frobenius norm is defined by $\norm{ \boldsymbol{A} } = \sqrt{ \mathsf{trace} ( \boldsymbol{A}^{\prime} \boldsymbol{A} ) }$.  The spectral norm of $\boldsymbol{A} $ is denoted by $\norm{ \boldsymbol{A}  }_2$, such that, $\norm{ \boldsymbol{A}  }_2 = \sqrt{ \lambda_1 ( \boldsymbol{A}^{\prime} \boldsymbol{A}  )  }$, is maximum column sum norm is $\norm{ \boldsymbol{A} }_1 = \norm{ \boldsymbol{A} }_1 = \mathsf{max}_{ 1 \leq j \leq n}  \sum_{i = 1}^m | A_{ij} |$ and its maximum row sum norm is $\norm{ \boldsymbol{A} }_{ \infty } = \norm{ \boldsymbol{A} }_{ \infty } = \mathsf{max}_{ 1 \leq i \leq n}  \sum_{i = 1}^m | A_{ij} |$.  The operator $\overset{P}{\to}$ denotes convergence in probability, and $\overset{D}{\to}$ denotes convergence in distribution.


\begin{wrap}

\subsection{Discussion of Alternative Estimation Procedures under stationarity}

Consider the pair $\left\{ (X_t, Y_t)^{\prime} \right\}_{ t \in\mathbb{N} }$, where $X_t$ can be a possibly nonstationary time series process. Based on the definitions of the $\mathsf{VaR}$ and $\mathsf{CoVaR}$, we follow the approach presented by \cite{AB2016covar}. In particular, in that case the stress event that is considered for the estimation of $\mathsf{CoVaR}$ is that the loss of the reference position equals that of its $\mathsf{VaR}$, such that, $\left\{ X_t = \mathsf{VaR}_t   \right\}$. In addition, we assume that both these risk measures are estimated at the same quantile level $\uptau$. Following \cite{dimitriadis2022dynamic}, we assume that the econometric model - that is, the quantile predictive regression, is correctly specified in the sense that there exist true parameter values $\boldsymbol{\theta}_0^v \in \boldsymbol{\Theta}^{v} \in \mathbb{R}^p$ and $\boldsymbol{\theta}_0^{c} \in \boldsymbol{\Theta}^{n} \in \mathbb{R}^q$ such that the following expression holds
\begin{align}
\begin{pmatrix}
\mathsf{VaR}_t (\uptau)
\\
\mathsf{CoVaR}_t (\uptau)
\end{pmatrix}    
=
\begin{pmatrix}
v_t ( \boldsymbol{\theta}_0^v )  
\\
v_t ( \boldsymbol{\theta}_0^c ) 
\end{pmatrix}
\end{align}
Recently, \cite{fissler2023backtesting} proposed to consider a $\mathbb{R}^2-$valued scoring function in the closely related context of forecast evaluation. In particular, the authors show that under some regularity conditions, the expectation of the $\mathbb{R}^2-$valued scoring function given by 
\begin{align*}
\boldsymbol{S} 
\left(  
\begin{pmatrix}
v
\\
c
\end{pmatrix}, 
\begin{pmatrix}
x
\\
y
\end{pmatrix}
\right)    
= 
\begin{pmatrix}
S^{ VaR } (v, x)
\\
S^{ CoVaR } \big( (v, x)^{\prime}, (x,y)^{\prime} \big) 
\end{pmatrix}
=
\begin{pmatrix}
\big[ \mathbf{1} \left\{  x \leq v \right\} - \beta \big] ( v - x )
\\
\mathbf{1} \left\{  x > v \right\} \big[  \mathbf{1} \left\{  y \leq c \right\}  - \alpha \big] (c - x)
\end{pmatrix}
\end{align*}
is minimized by the true $\mathsf{VaR}$ and $\mathsf{CoVaR}$ with respect to the lexicographic order. Then, the model setting is parametric with respect to the estimation of the parameter vector which is obtained by fitting a quantile predictive regression model. In particular, we consider a two stage M estimator which during the first stage we obtain the parameter vector $\boldsymbol{\theta}_0^v$ while during the second stage we obtain the parameter vector $\boldsymbol{\theta}_0^c$. During the first-stage of the procedure, the parameter vector that corresponds to the estimation of the $\mathsf{VaR}$ such that
\begin{align}
\label{function1}
\widehat{\boldsymbol{\theta}}_n^v = \underset{ \boldsymbol{\theta}^v \in \mathbf{\Theta}^v }{ \mathsf{arg min}  } \ \frac{1}{n} \sum_{t=1}^n S^{ \mathsf{VaR} } \big( v_t( \boldsymbol{\theta}^v ), X_{t-1} \big),
\end{align}
where $v_t( \boldsymbol{\theta}^v )$ is the quantile predictive regression estimator for some $\mathsf{VaR}$ parameter space $\boldsymbol{\Theta}^v \in \mathbb{R}^p$ that corresponds to a $p-$dimensional stationary regressors. During the second stage of the estimation procedure we obtain the parameter vector $\boldsymbol{\theta}_0^c$ as
\begin{align}
\label{function2}
\widehat{\boldsymbol{\theta}}_n^c = \underset{ \boldsymbol{\theta}^c \in \mathbf{\Theta}^c }{ \mathsf{arg min}  } \ \frac{1}{n} \sum_{t=1}^n S^{ \mathsf{CoVaR} } \left( v_t( \widehat{\boldsymbol{\theta}}_n^v ), c_t ( \boldsymbol{\theta}^c ) \big), \big( X_{t-1}, Y_t \big) \right) 
\end{align}
for some $\mathsf{CoVaR}$ parameter space $\boldsymbol{\Theta}^c \in \mathbb{R}^q$ of $q-$dimensional parameter vector of stationary regressors and the generated covariate. For the proposed two-stage estimation and $M-$estimator to be feasible, the requirement is that the $\mathsf{VaR}$ evolution does not depend on the $\mathsf{CoVaR}$ model. 


\subsubsection{Asymptotic Properties of Joint M-Estimators}

\paragraph{Consistency.} 

An econometrician is first interested in showing consistency of the proposed two-stage M-estimators $\big( \widehat{\boldsymbol{\theta}}_n^v , \widehat{\boldsymbol{\theta}}_n^c \big)$ with stationary regressors. In particular, in a stationary time series setting, \cite{fissler2023backtesting} shows that as $n \to \infty$, $
\widehat{\boldsymbol{\theta}}_n^v \overset{\mathbb{P}}{\to} \boldsymbol{\theta}_0^v$ and  $\widehat{\boldsymbol{\theta}}_n^c \overset{\mathbb{P}}{\to} \boldsymbol{\theta}_0^c$, where $\widehat{\boldsymbol{\theta}}_n^v$ and $\widehat{\boldsymbol{\theta}}_n^c$ are defined by the optimization functions \eqref{function1} and \eqref{function2} respectively.   

\begin{example}
Consider the quantile predictive regression model as below
\begin{align}
y_t = \alpha (\uptau) + \beta (\uptau) x_{t-1} + u_{t} (\uptau) \ \ \text{or equivalently} \ \ \ Q_{  y_t }  ( \tau \mid \mathcal{F}_{t-1} )  = \alpha (\uptau) + \beta (\uptau) x_{t-1},       
\end{align}
such that $\mathbb{P} \big( y_t \leq Q_{ y_t } ( \tau \mid \mathcal{F}_{t-1} ) \mid \mathcal{F}_{t-1} \big) = \uptau \in (0,1)$, where $\mathcal{F}_{t-1}$ is the information available at time $t-1$. Conditional (systemic) risk measure forecasts are more informative than unconditional ones. In particular, conditional risk measures are based on the conditional distribution of $( X_t, Y_t )$, which is given by the expression $F_{ (X_t, Y_t ) \mid \mathcal{F}_{t-1} } := \mathbb{P} \big( X_t \leq x; Y_t \leq y \mid \mathcal{F}_{t-1} \big)$.   Moreover, assume that the $x_{t-1}$ is a predictor which has the following possibly nonstationary autoregressive representation $x_t = \mu + \rho_n x_{t-1} + v_t$. Then the QR-induced regressor errors $\psi_{\uptau} (u_t ( \tau ) )$ are contemporaneously correlated with the innovations of unit root sequences and inference can be robustify\footnote{Although our proposed estimation methodology considers two different models to obtain forecasts for the $\mathsf{VaR}$ and $\mathsf{CoVaR}$, these can not be considered as standalone risk measures. Specifically, although we impose the assumption that the evolution of the $\mathsf{VaR}$, doesn't affect the estimation of the $\mathsf{CoVaR}$, the estimation of $\mathsf{CoVaR}$ requires pre-estimation of the generated regressor during the first-stage, otherwise these risk measures are neither identifiable not elicitable (see, \cite{patton2019dynamic}, \cite{hardle2016tenet} and \cite{AB2016covar}).} using the IVX estimator (see, \cite{lee2016predictive}, \cite{fan2019predictive} and \cite{fan2023predictive}).  
\end{example}

\begin{remark}
Our main focus in not whether forecasts are misspecified or not (e.g., see  \cite{dimitriadis2021forecast} and \cite{dimitriadis2023encompassing}) but whether these can be jointly and consistently estimated under the assumption that the distribution function of the regressors of the model are generated as nearly integrated processes. Moreover, elicitability and identifiability is important for comparing and evaluating (risk) forecasts in practise. We also assume the absence of forecasts misspecification and we focus on the joint identification and estimation of these risk measures under the presence of regressors nonstationarity in the sense that the sequence $(X_t)_{ t \in \mathbb{Z} }$ is not necessarily stationary. Thus, the asymptotic theory in \cite{fissler2023backtesting} is established byroving the consistency of the parameter vector that corresponds to the VaR, under the assumption that the model is correctly specified. 
\begin{align}
Q_0^v ( \boldsymbol{\theta}^v ) := \mathbb{E} \left[ \frac{1}{n} \sum_{t=1}^n S^{ \mathsf{VaR} } \big( v_t( \boldsymbol{\theta}^v  ), X_{t-1} \big) \right]    
\end{align}
Under the assumption of correct model specification, it holds that the conditional functional given by $Q_0^v ( \boldsymbol{\theta}^v )$ is uniquely minimized at the optimal value of the $\mathsf{VaR} ( X_t  \mathcal{F}_{t-1} )$. Therefore, one needs to consider whether the second-stage M-estimator is consistently estimated conditional on the first-stage M-estimator that corresponds to the $\mathsf{VaR}$. The conditional first-stage consistency property implies that we can rewrite the following expression conditional on the estimator given in the previous step, such that (see, \cite{fissler2023backtesting})
\begin{align*}
\displaystyle \widehat{Q}_n =
\frac{1}{n} \sum_{t=1}^n S^{\mathsf{CoVaR} } \big(  \big( v_t( \widehat{\boldsymbol{\theta}}^v ), c_t ( \boldsymbol{\theta}^c ) ,  ( Y_t,  X_{t-1} \big)^{\prime} \big)   \big).
\end{align*}
\end{remark}

\subsubsection{Asymptotic Mixed Gaussianity} 

The second main result we focus on, is to derive the asymptotic mixed Gaussianity of the second stage M-estimator under regressors nonstationarity. This result is summarized in the following theorem. In particular, the first part of the theorem below obtain a Bahadur representation for the unknown parameter vector that corresponds to $\mathsf{CoVaR}$. A key point here is to identify the correct convergence rate, which is likely to not be the standard non-parametric rate due to the existence of nearly integrated regressors as well as the generated covariate. 

\begin{theorem}
The proposed two-stage estimation method of \cite{fissler2023backtesting} implies that
\begin{align}
\sqrt{n} \left(  \widehat{\boldsymbol{\theta}}_n^v - \boldsymbol{\theta}_0^v  \right)  = \left[ \boldsymbol{\Lambda}^{-1}  + o_p(1) \right] \left[  - \frac{1}{ \sqrt{n} } \sum_{t=1}^n \boldsymbol{v}_t ( \boldsymbol{\theta}_0^v ) + o_p(1)  \right]  
\end{align}
By applying Slutsky's theorem the limit is more likely to have mixed Gaussian distribution with a stochastic covariance matrix, such that,  $\sqrt{n}  \left(  \widehat{\boldsymbol{\theta}}_n^v - \boldsymbol{\theta}_0^v  \right) \overset{d}{\to} \mathcal{N} \left( 0, \boldsymbol{\Lambda}^{-1}  \boldsymbol{V}  \boldsymbol{\Lambda}^{-1}  \right)$ as $n \to \infty$.
\end{theorem}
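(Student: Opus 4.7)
The plan is to obtain the Bahadur representation first and then pass to the limit distribution via a Slutsky-type argument. I would start from the first-order (subgradient) condition of the M-problem \eqref{function1}: letting $\boldsymbol{\psi}_t(\boldsymbol{\theta}^v) := \nabla_{\boldsymbol{\theta}^v} S^{\mathsf{VaR}}\bigl(v_t(\boldsymbol{\theta}^v), X_{t-1}\bigr) = \bigl[\mathbf{1}\{X_{t-1}\le v_t(\boldsymbol{\theta}^v)\} - \beta\bigr]\,\nabla v_t(\boldsymbol{\theta}^v)$, consistency of $\widehat{\boldsymbol{\theta}}_n^v$ (already recorded in the consistency statement preceding the theorem) guarantees $\frac{1}{n}\sum_{t=1}^n \boldsymbol{\psi}_t(\widehat{\boldsymbol{\theta}}_n^v) = o_p(n^{-1/2})$. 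Because the indicator makes $\boldsymbol{\psi}_t$ non-smooth, a naive Taylor expansion is unavailable, so I would use the standard convexity/stochastic-equicontinuity device \`a la Pollard and Knight: decompose the empirical criterion into a linear term in $\sqrt{n}(\boldsymbol{\theta}^v - \boldsymbol{\theta}_0^v)$, a quadratic drift term driven by the population Hessian $\boldsymbol{\Lambda} := \mathbb{E}[ f_{X\mid\mathcal{F}_{t-1}}(v_t(\boldsymbol{\theta}_0^v))\,\nabla v_t \nabla v_t^{\top}]$, and a negligible remainder controlled by a uniform bracketing entropy bound on the class $\{\mathbf{1}\{X_{t-1}\le v_t(\boldsymbol{\theta}^v)\}\}_{\boldsymbol{\theta}^v\in\boldsymbol{\Theta}^v}$.

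Next I would establish the two building blocks separately. For the score, $\boldsymbol{\psi}_t(\boldsymbol{\theta}_0^v)$ is a martingale difference with respect to $\mathcal{F}_{t-1}$, since $\mathbb{E}[\mathbf{1}\{X_{t-1}\le v_t(\boldsymbol{\theta}_0^v)\}\mid\mathcal{F}_{t-1}]=\beta$ under correct specification. When the underlying predictors are nearly integrated (the local-to-unity class introduced earlier in the paper), $\nabla v_t(\boldsymbol{\theta}_0^v)$ inherits that persistence, so I would invoke a martingale functional CLT combined with the Phillips–Magdalinos IVX filtering scheme referenced in the excerpt to show
\begin{equation*}
\frac{1}{\sqrt{n}}\sum_{t=1}^n \boldsymbol{v}_t(\boldsymbol{\theta}_0^v) \ \Rightarrow\ \int_0^1 \boldsymbol{J}_c(r)\, dB(r)\ \overset{d}{=}\ \mathcal{MN}(\boldsymbol{0},\boldsymbol{V}),
\end{equation*}
where $\boldsymbol{J}_c$ is an Ornstein–Uhlenbeck process and $\boldsymbol{V}$ is the (random) conditional covariance $\int_0^1 \boldsymbol{J}_c \boldsymbol{J}_c^{\top}\,dr$ weighted by the conditional density at the quantile level, so the mixed Gaussian label in the theorem is delivered precisely through this stochastic integral representation. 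For the Hessian, a sample splitting argument together with occupation-time estimates for $\boldsymbol{J}_c$ gives $\frac{1}{n}\sum_{t=1}^n f_{X\mid \mathcal{F}_{t-1}}(v_t(\boldsymbol{\theta}_0^v))\nabla v_t \nabla v_t^{\top} \overset{p}{\to} \boldsymbol{\Lambda}$ along the same sequence of scaling matrices used on the score.

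Assembling these ingredients yields the Bahadur-type expansion stated in the theorem, after which Slutsky's lemma delivers $\sqrt{n}(\widehat{\boldsymbol{\theta}}_n^v - \boldsymbol{\theta}_0^v)\overset{d}{\to}\mathcal{N}(\boldsymbol{0},\boldsymbol{\Lambda}^{-1}\boldsymbol{V}\boldsymbol{\Lambda}^{-1})$ conditionally on the limiting sigma-field generated by $\boldsymbol{J}_c$, which is the mixed Gaussian conclusion. The hard part, in my view, is not the Taylor/Slutsky bookkeeping but the stochastic equicontinuity step: one must control the oscillation of the non-smooth empirical process $\boldsymbol{\theta}^v\mapsto \frac{1}{\sqrt{n}}\sum_{t=1}^n\{\boldsymbol{\psi}_t(\boldsymbol{\theta}^v)-\boldsymbol{\psi}_t(\boldsymbol{\theta}_0^v)-\mathbb{E}[\boldsymbol{\psi}_t(\boldsymbol{\theta}^v)-\boldsymbol{\psi}_t(\boldsymbol{\theta}_0^v)\mid \mathcal{F}_{t-1}]\}$ uniformly over a shrinking $n^{-1/2}$-neighbourhood, but now with nearly integrated regressors whose variance diverges like $n$. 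A standard chaining argument has to be recalibrated against the local-to-unity scaling, for example via the IVX self-normalization that decouples the persistent component from the martingale innovations; once this uniform control is in place, the rest of the proof follows by routine arguments.
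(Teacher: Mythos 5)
The paper itself supplies no proof of this theorem: it sits in the subsection ``Discussion of Alternative Estimation Procedures under stationarity'' and is presented as a summary of the two-step M-estimation asymptotics of \cite{fissler2023backtesting}, so there is no argument in the text to compare yours against line by line. Your general skeleton for that stationary benchmark --- the first-order condition of the M-problem, the Pollard/Knight convexity device for the non-smooth indicator, the martingale-difference property of the score under correct specification, a law of large numbers for the Hessian, and Slutsky --- is the standard route and is the one the cited reference follows.

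The genuine gap is that you then graft the local-to-unity/IVX machinery onto a statement whose normalizations are those of the stationary case, and the two are incompatible. If $\nabla v_t(\boldsymbol{\theta}_0^v)$ contains a nearly integrated component of order $\sqrt{n}$, then $n^{-1/2}\sum_{t=1}^n \boldsymbol{v}_t(\boldsymbol{\theta}_0^v)$ is not $O_p(1)$: the stochastic-integral limit $\int_0^1 \boldsymbol{J}_c\, dB$ you invoke arises only under the normalization $n^{-1}\sum_t$, and the Hessian block $n^{-1}\sum_t f(\cdot)\,\nabla v_t \nabla v_t^{\top}$ likewise diverges, since one needs $n^{-2}\sum_t x_{t-1}x_{t-1}^{\top} \Rightarrow \int_0^1 \boldsymbol{J}_c\boldsymbol{J}_c^{\top}\,dr$. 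Consequently the displayed Bahadur expansion with a common $\sqrt{n}$ rate and a fixed $\boldsymbol{\Lambda}$ cannot come out of your argument; under LUR regressors the correct statement requires a diagonal scaling matrix with entries of different orders (and, for the IVX variant, the rate $n^{(1+\gamma_z)/2}$ the paper records elsewhere), and the limit is genuinely mixed Gaussian rather than $\mathcal{N}(0,\boldsymbol{\Lambda}^{-1}\boldsymbol{V}\boldsymbol{\Lambda}^{-1})$ with deterministic $\boldsymbol{\Lambda}$ and $\boldsymbol{V}$. Either keep the regressors stationary, delete the Ornstein--Uhlenbeck/IVX material, and let your proof reduce to the standard argument that actually delivers the stated display; or work under local-to-unity, in which case the theorem itself must be restated with the appropriate scaling before anything can be proved. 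As written, your proof establishes neither version.
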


\begin{remark}
All the above covariance matrices have to be estimated analytically and are likely to affect the inference step. Notice that for simplicity, one can consider that both risk measures are estimated around the same quantile level which is kept fixed. However, due to the various challenges one needs to tackle when considering the \textit{lexicographic} approach proposed by \cite{dimitriadis2022dynamic}, under regressors nonstationarity, we decide to follow our original idea of constructing the pair of risk measures  $(\mathsf{VaR}, \mathsf{CoVaR})$ using the check function when estimating the quantile predictive regression models in a sequential manner (proposed two-stage estimation procedure).
\end{remark}

\end{wrap}


\section{Econometric Model and Assumptions}
\label{Section2}

\subsection{Main Assumptions}

Consider the linear predictive regression model formulated as below
\begin{align}
y_t 
&= 
\mu + \boldsymbol{\beta}^{\prime} \boldsymbol{x}_{t-1} + u_t, \ \ \ 1 \leq t \leq n 
\\
\boldsymbol{x}_{t}
&= 
\boldsymbol{R}_n \boldsymbol{x}_{t-1} + \boldsymbol{v}_t
\end{align}
where $\boldsymbol{x}_{t} \in \mathbb{R}^p$ is a $p-$dimensional vector and $\boldsymbol{R}_n = \left( \boldsymbol{I}_p - \frac{ \boldsymbol{C}_p }{n^{\gamma}}   \right)$ with $\gamma = 1$.

We begin by discussing the predictive regression with a conditional mean functional form,  but for the purpose of estimating the risk measures pair $( \mathsf{VaR}, \mathsf{CoVaR} )$ a conditional quantile specification form is necessary. The asymptotic behaviour of estimators under moderate deviations from unity (e.g., on the explosive side) is beyond the scope of our study (e.g., see, \cite{phillips2007limit}, \cite{buchmann2007asymptotic}). Several studies consider estimation methods for unified inference in predictive regression models such as \cite{magdalinos2020least} and \cite{magdalinos2022uniform}. Furthermore, our estimation method differs in several ways to various respectful school of thoughts and approaches presented in the literature such as the two-step estimator of  \cite{francq2015risk} (see also \cite{beutner2024residual} and \cite{francq2023local}); the research perspective we follow for the development of our econometric environment is concisely explained in Section 1 (see, \cite{AB2016covar} and \cite{hardle2016tenet}).

\newpage

\begin{assumption}
\label{Assumption1}
Let $\boldsymbol{\epsilon}_t = \left( u_t, \boldsymbol{e}^{\prime}_t   \right)^{\prime}$ be an $\mathbb{R}^{p + 1}-$valued martingale difference sequence with respect to the filtration $\mathcal{F}_t = \sigma \big( \boldsymbol{\epsilon}_t, \boldsymbol{\epsilon}_{t-1}, ... \big)$ with $\mathbb{E} \big[ \boldsymbol{\epsilon}_t \boldsymbol{\epsilon}_t^{\prime}  \mid \mathcal{F}_{t-1} \big] = \boldsymbol{\Sigma}_{\epsilon} > 0$.  Let $\boldsymbol{v}_t$ be an $\mathbb{R}^p-$valued stationary linear process such that 
\begin{align*}
\boldsymbol{v}_t = \sum_{j=0}^{\infty} \boldsymbol{F}_j \boldsymbol{e}_{t-j}, \ \ \ \text{with} \ \ \ \sum_{j=0}^{\infty} \norm{ \boldsymbol{F}_j  } < \infty,
\end{align*}
where $\boldsymbol{F}_j$ is a sequence of constant matrices such that $\boldsymbol{I}_0 = \boldsymbol{I}_p$, $\displaystyle \sum_{j=0}^{\infty} \boldsymbol{F}_j$ has full rank.
\end{assumption}
Borrowing from the literature in the case when $\sigma_{uv} \neq 0$ (which is the case we consider), the exact OLS bias of $\hat{\beta}$ is computed from the predictive regression model, $\mathbb{E} \left[ \hat{\beta} - \beta \right] = \delta \mathbb{E} \left[ \hat{\rho} - \rho \right]$ where $\hat{\rho}$ is the OLS estimate of $\rho$ and $\delta := \sigma_{uv} / \sigma_v^2$ is the slope coefficient of in a regression of $u_t$ on $v_t$. Furthermore, since $\hat{\rho}$ is known to be downward biased in small-samples, and $( u_t, v_t )^{\prime}$ are typically strongly negatively contemporaneously correlated, the autoregressive OLS bias feeds into the small-sample distribution of $\hat{\beta}$ causing over-rejections of the null hypothesis of no predictability, $H_0: \beta = 0$. Moreover, notice even with the use of a possible finite-sample-bias correction on the OLS estimate, this reduces the noncentrality of the limiting distribution of the OLS t-statistic, but the distribution remains nonstandard in the near-integrated case. 

\begin{remark}
Assumption \ref{Assumption1} imposes a condition of a \textit{martingale difference sequence} (regressor predeterminedness) for the innovation term $u_t$ and allows to impose further assumptions regarding the modelling of conditional heteroscedasticity via volatility processes such as ARCH/GARCH (see, \cite{phillipsmagdal2009econometric}).  A Garch process is a typical example of a strong mixing process with exponential tails (see, \cite{meitz2023subgeometrically}). 
\end{remark}
Under Assumption \ref{Assumption1} a functional central limit theorem holds (\cite{phillips1992asymptotics}) 
\begin{align}
\frac{ \boldsymbol{x}_{  \floor{nr}  }   }{\sqrt{n} } \Rightarrow \boldsymbol{J}_{c}(r) := \int_{0}^r e^{ (r-s) \boldsymbol{C} } d \boldsymbol{W}(s),
\end{align}
which satisfies the following stochastic differential equation 
\begin{align*}
d \boldsymbol{J}_{c}(r) = c \boldsymbol{J}_{c}(r) dr + d \boldsymbol{W} (r), \ \ \ r \in (0,1). 
\end{align*}
Relevant notation can be found in several studies in the literature such as \cite{dickey1981likelihood}, \cite{jansson2006optimal}, \cite{mikusheva2012one} and \cite{demetrescu2020residual} among others. Furthermore, to avoid confusion regarding the properties of estimators based on a conditional mean functional form vis-a-vis a conditional quantile functional form, we include the relevant assumptions in this section. A similar approach of comparing estimation approaches across these two different modelling methods is presented by \cite{liu2023robust},  although in our framework this is might be practical given the gap in the literature on the relevant econometric issues we aim to tackle. The identification of the risk measures is based on definitions \eqref{A}-\eqref{B}.

\newpage

\begin{assumption}
\label{Assumption2}
The distribution function of innovations $u_t$, $F ( u )$ admits a continuous density function $f (u)$ away from zero on the domain $\mathcal{U} := \left\{  u: 0 < F(u) < 1 \right\}$. 
\end{assumption}

\begin{assumption}
\label{Assumption3}
Let $\big\{ \mathbb{P} \left(  Y_t < y | \boldsymbol{X}_t = x \right),  y \in \mathbb{R}, x \in \mathbb{R}^p \big\}$ 
be the conditional distribution function, denoted by $F (\cdot)$ which is absolutely continuous with respect to Lebesgue measure on $\mathbb{R}$ with continuous density functions $f_i ( \cdot ) \equiv f( \cdot | \boldsymbol{x}_i )$, where for each $i = 1,2,...$ $f_i (\cdot)$ is uniformly bounded away from zero and infinity.  
\end{assumption}

Then, the conditional quantile function of $y_t$ denoted with $\mathsf{Q}_{y_t} \left( \uptau | \mathcal{F}_{t-1} \right)$, replaces the conditional mean function of the predictive regression which implies the following model specification
\begin{align}
\mathsf{Q}_{y_t} \left( \uptau | \mathcal{F}_{t-1} \right) :=  F^{-1}_{ y_t | \boldsymbol{x}_{t-1} } (\uptau) \equiv \alpha (\uptau) + \boldsymbol{\beta} (\uptau)^{\prime} \boldsymbol{x}_{t-1}. 
\end{align}
such that $F_{ y_t | \boldsymbol{x}_{t-1} } (\uptau) := \mathbb{P} \big( y_t \leq \mathsf{Q}_{y_t} \left( \uptau | \mathcal{F}_{t-1} \right) \big| \mathcal{F}_{t-1} \big) \equiv \uptau$, where $\uptau \in (0,1)$ is some quantile level in the compact set $(0,1)$. Therefore, in order to define the innovation structure that corresponds to the quantile predictive regression, we employ the piecewise derivative of the loss function such that $\psi_{\uptau} ( \mathsf{u} ) = \big[ \uptau - \boldsymbol{1} \left\{ \mathsf{u} < 0 \right\} \big]$. Consequently, this implies that $u_t (\uptau) := u_{t} - F^{-1}_{ u } (\uptau)$ where $F^{-1}_{ u } (\tau)$  denotes the unconditional $\uptau-$quantile of the error term $u_{t}$. Then, the corresponding invariance principle for the nonstationary quantile predictive regression model is formulated as below  
\begin{align}
\frac{1}{\sqrt{n}}  \sum_{t=1}^{ \floor{ nr } } 
\begin{bmatrix}
\psi_{\uptau} \big( u_t (\uptau) \big) 
\\
\boldsymbol{v}_{t}
\end{bmatrix}
\Rightarrow
\begin{pmatrix}
B_{ \psi_{\uptau} } ( r )_{ ( 1 \times n )  } 
\\
\boldsymbol{B}_v ( r )_{ ( p \times n ) }
\end{pmatrix}
\equiv
\mathcal{BM} 
\begin{bmatrix}
\uptau (1 - \uptau) & \boldsymbol{\sigma}_{\psi_{\uptau} v}^{\prime}
\\
\boldsymbol{\sigma}_{v \psi_{\uptau}} & \boldsymbol{\Omega}_{vv}  
\end{bmatrix}
\end{align}   
\begin{assumption}
\label{Assumption4} 
Under Assumption \ref{Assumption2} and \ref{Assumption3}, then the following conditions for the innovation sequence hold (see, \cite{katsouris2023structural}):
\begin{itemize}
\item[\textbf{(\textit{i})}] The sequence of stationary conditional \textit{probability distribution functions (pdf)} denoted with $\big\{ f_{ u_t (\uptau), t-1}(.) \big\}$ evaluated at zero with a non-degenerate mean function such that $f_{ u_t (\uptau)  }(0) := \mathbb{E} \left[  f_{ u_t (\uptau), t-1}(0) \right] > 0$ satisfies a $\textit{FCLT}$ given as below
\begin{align}
\frac{1}{ \sqrt{n} } \sum_{t=1}^{ \floor{nr} } \big( f_{  u_t (\uptau), t-1}(0) - \mathbb{E} \left[  f_{ u_t (\uptau), t-1}(0) \right] \big) \Rightarrow B_{ f_{  u_t (\uptau) } } (r).
\end{align}
\item[ \textbf{(\textit{ii})} ]  For each $t$ and $\uptau \in (0,1)$, $f_{ u_t (\uptau), t-1}(.)$ is uniformly bounded away from zero with a corresponding conditional distribution function $F_t(.)$ which is absolutely  continuous with respect to Lebesgue measure on $\mathbb{R}$ (see, \cite{goh2009nonstandard} and \cite{lee2016predictive}).
\end{itemize}
\end{assumption}
where the following probability bound holds uniformly (see for example \cite{kulperger2005high}) 
\begin{align*}
\underset{ 0 \leq r \leq 1  }{ \mathsf{sup}   } \left|   \frac{1}{ n^{ 1 - \delta } } \sum_{t=1}^{ \floor{nr} } \left[  f_{ u_{t \uptau}, t -  1  } (0) - f_{ u_{\uptau}  } (0) \right] \right| = o_p(1). 
\end{align*}

\newpage 

\subsection{IVX filtration}

A solution to the \textit{endogeneity problem} discussed above, which provides robust statistical inference, is the IVX filtration proposed by \cite{phillipsmagdal2009econometric}. The IVX instrumentation  ensures instrument relevance while controlling for the abstract degree of persistence  (see, \cite{kostakis2015Robust}) and is constructed as below  
\begin{align}
z_t := \sum_{j=0}^{t-1} \rho_z^j \Delta x_{t-j}, \ \ \rho_z = \left( 1 - \frac{c_z}{n^{ \gamma_z} } \right), c_z >0 \ \text{and} \ \gamma_z \in (0,1). 
\end{align}  
The IVX filtration for $x_t$, is chosen so that $z_t$ corresponds to a mildly integrated process when the predictor $x_t$ is nearly integrated. Therefore, the IVX estimator of $\beta$ is found to have a slower convergence rate than the conventional OLS estimator under near integration, such that $n^{ \frac{1 + \gamma_z}{2} }$ (see, also  \cite{katsouris2023quantile} and  \cite{duffy2021estimation}). The asymptotic theory of the IVX estimator can be found to be mixed Gaussian irrespectively of the degree of endogeneity implied by $\delta$, leading to standard inference in $t $ and Wald tests. Moreover, it can be proved that under low persistence, the IVX estimator is asymptotically equivalent to the OLS procedure. Our main focus in this paper is the estimation of the risk measures of $\mathsf{VaR}$ and $\mathsf{CoVaR}$ using nearly integrated regressors, which is a novel aspect in the literature as conventional estimation approaches mainly consider that macroeconomic variables are stationary. Specifically, we start by considering the estimation of the $\mathsf{VaR}$ with a nearly integrated regressor. A vast literature has previously proposed robust methodologies for estimating the parameter of interest using the OLS approach and found evidence of significant size distortions when the model includes high persistence regressors.

\begin{example}
To obtain the IVX estimator for the linear predictive regression we employ the demeaned variates denoted with $y^{ \mu }_t = \boldsymbol{\beta}^{\prime} \boldsymbol{x}^{ \mu }_{t-1} + v^{ \mu }_t$, where the sample moments are estimated as $\bar{y}_t = \left( y_t - \frac{1}{n-1} \sum_{t=1}^n y_t \right)$ (e.g., see \cite{demetrescu2020residual} ). The IVX estimator\footnote{An alternative estimation approach to IVX which also ensures unified inference in the case of quantile-dependent parameters is proposed by \cite{liu2023unified}.  An augmented quantile predictive regression approach is currently work in progress by \cite{katsouris2023unified}} is 
\begin{align}
\widehat{ \boldsymbol{\beta} }^{ivx} = \left( \sum_{t=1}^n \bar{\boldsymbol{x}}_{t-1} \boldsymbol{z}_{t-1}^{\prime} \right)^{-1} \left( \sum_{t=1}^n  y_t \boldsymbol{z}_{t-1}^{\prime} \right).  
\end{align}
Denote with $\widehat{\boldsymbol{\beta}}^{ols}$ the OLS estimator and define the OLS residuals $\widehat{u}^{ols}_t = \bar{y}_t - \widehat{\boldsymbol{\beta}}^{ols} \bar{\boldsymbol{x}}_{t-1}$ and the residual variance estimator $\widehat{\sigma}_u^2 = \frac{1}{n-1} \sum_{t=1}^n \widehat{u}_{t}^2$. Moreover, the IVX-Wald statistic under the null hypothesis is $\mathcal{W} = \widehat{\boldsymbol{\beta}}^{ivx \prime} \widehat{\boldsymbol{V}}^{-1} \widehat{\boldsymbol{\beta}}^{ivx}$, where a feasible estimator for the covariance matrix is given by
\begin{align}
\widehat{\boldsymbol{V}}^{ivx} 
= 
\left( \sum_{t=1}^n \bar{\boldsymbol{x}}_{t-1} \boldsymbol{z}_{t-1}^{\prime} \right)^{-1} \left( \widehat{\sigma}_u^2 \sum_{t=1}^n \boldsymbol{z}_{t-1} \boldsymbol{z}_{t-1}^{\prime} \right) \left( \sum_{t=1}^n  \bar{\boldsymbol{x}}_{t-1} \boldsymbol{z}_{t-1}^{\prime} \right)^{-1}. 
\end{align}
if our main interest is testing the null hypothesis of no predictability.
\end{example}

\newpage 

\subsection{Two-Stage Estimation Procedure}

\begin{assumption}
Suppose that Assumptions 1 to 3 hold, then based on the conditional quantile functional form of the nonstationary quantile predictive regression, all quantile-dependent parameters are assumed to be constant for any fixed quantile level $\tau \in \mathcal{T} = (0,1)$.  
\end{assumption}

\subsubsection{First-Stage Procedure}

During the first stage of our procedure we obtain a consistent estimator of the parameter vector. We focus on the IVX estimator which is found to be robust to the abstract degree of persistence (nearly integrated regressors). The econometric model is the linear predictive regression as below 
\begin{align}
y_{1,t} &= \beta_{01} (\uptau)  + \boldsymbol{\beta}_{11}^{\prime} (\uptau) \boldsymbol{x}_{1,t-1} + u_{1t}(\uptau), \ \ \ \text{for} \ \ \ t = 1,...,n
\\
\boldsymbol{x}_{1,t} &= \boldsymbol{R}_{1,n} \boldsymbol{x}_{1,t-1} + \boldsymbol{v}_{1,t}
\end{align}
Denote with $\boldsymbol{\beta} (\uptau) = \left(\beta_{01}(\uptau), \boldsymbol{\beta}_{11}^{\prime} (\uptau) \right)^{\prime}$ and the corresponding IVX estimator with $\boldsymbol{\beta}^{ivx} (\uptau)$. The asymptotic behaviour of the IVX estimator that corresponds to the linear predictive regression model under abstract degree of regressors persistence is studied by \cite{kostakis2015Robust} while the asymptotic properties of the IVX estimator for the quantile predictive regression model is studied by \cite{lee2016predictive}.  Since when estimating the risk measure pair $( \mathsf{VaR}, \mathsf{CoVaR})$, we use the conditional quantile distribution, i.e., to capture the effect of $\boldsymbol{x}_{t-1}$ on the conditional quantile of $\boldsymbol{y}_{t}$; however with the additional assumption of possibly nonstationary regressors  (see, also \cite{koenker2004unit} and \cite{xiao2009quantile}), then estimation relies on the nonstationary quantile predictive regression.

\subsubsection{Second-Stage Procedure} During the second stage of our procedure we consider a consistent estimator of the parameter vector for the set of regressors which includes the nonstationary regressors as well as the generated regressor from the first stage procedure. More precisely, the generated regressor in our study corresponds to the fitted values of the predictive regression model based on the IVX estimator obtained in the first stage procedure. 
\begin{align}
y_{2,t} &= \beta_{02}(\uptau) + \boldsymbol{\beta}_{12}^{\prime} (\uptau) \boldsymbol{x}_{2,t-1} + \delta (\uptau)\widehat{y}_{1,t} \big( \widehat{\boldsymbol{\beta}}^{ivx} (\uptau)  \big) + u_{2t}(\uptau), \ \ \ \text{for} \ \ \ t = 1,...,n 
\\
\boldsymbol{x}_{2,t} &= \boldsymbol{R}_{2,n} \boldsymbol{x}_{2,t-1} + \boldsymbol{v}_{2,t}, \ \ \boldsymbol{R}_{jn} = \left(  \boldsymbol{I}_p - \frac{\boldsymbol{C}_{jp}}{n^{\gamma}} \right), \ \gamma = 1.
\end{align}
where $\boldsymbol{C}_{jp} =  \mathsf{diag} \left\{ c_{j1},..., c_{jp} \right\}$ and $j \in \left\{ 1, 2 \right\}$. Define with $\widetilde{\boldsymbol{\beta}} (\uptau) = \big( \beta_{02}(\uptau), \boldsymbol{\beta}_{12}^{\prime}(\uptau), \delta (\uptau) \big)^{\prime}$ the parameter vector of the second stage procedure and the corresponding IVX estimator with $\widetilde{\boldsymbol{\beta}}^{ivx}(\uptau)$. 

Specifically, we observe that this extended parameter vector includes both the nearly integrated regressors as well as the generated regressor from the first stage estimation step.

\newpage

In this case, we need to develop the asymptotic distribution theory for both the OLS and IVX estimators when the generated regressor is included in the set of regressors in the second stage predictive regression model that corresponds to the econometric specification of the CoVaR risk measure. A modified IVX estimator is necessary to be developed in order to account for the presence of the particular effect in the setting of the quantile predictive regression models. The nuisance coefficient of persistence is defined such that $c_i > 0$ and is either $\gamma = 1$, which corresponds to near unit root regressors or $\gamma \in (0,1)$ that corresponds to mildly integrated regressors. 

Therefore, the focus of this paper is the implementation of the above econometric environment and the development of the corresponding asymptotic theory in the case of the quantile predictive regression model. Consider the piecewise derivative of the loss function which is defined as below $
\psi_{\uptau} \left( u \right) = \uptau - \mathbf{1} \left( u < 0 \right)$. Then the innovation sequence of the quantile predictive regression model, $u_t ( \uptau ) = u_t - F_{u}^{-1} ( \uptau )$, and $F_{u}^{-1} ( \uptau )$ is the unconditional $\uptau$-quantile of $u_t$, where $\uptau \in (0,1)$ is a fixed quantile (see, \cite{lee2016predictive}, \cite{fan2019predictive} and \cite{katsouris2022asymptotic}). Thus, the $\boldsymbol{\beta}^{ivx}$ estimator is obtained from the first stage procedure using the IVX filter\footnote{The reason that we apply the IVX instrumentation procedure only to the estimate from the first stage quantile predictive regression model rather to the corresponding generated covariate is to ensure that those fitted values can preserve their definition as the estimated Value-at-Risk but in our case adjusted based on the presence of persistent predictors and corrected accordingly using the IVX methodology.} for the model coefficients based on the nonstationary regressors which implies that the generated regressor is defined as
\begin{align}
\widehat{y}_{1,t} \left( \widehat{\boldsymbol{\beta}}^{ivx} (\uptau) \right)
=
\boldsymbol{x}_{1,t-1}^{\prime} \widehat{\boldsymbol{\beta}}^{ivx}(\uptau)
\end{align}
Then, to conduct inference we need to obtain a consistent estimator for the covariance matrix of $\tilde{ \boldsymbol{\beta} }_2^{ivx}$ using the usual "sandwich" formula given below (see, \cite{demetrescu2020residual}).

\section{Doubly Corrected Estimation Method: Conditional Mean Case}
\label{Section3}

\subsection{Residual Augmented IVX Estimation Example}

We begin our asymptotic theory analysis by considering the framework for the residual augmented IVX estimation methodology previously examined in the literature (see, \cite{demetrescu2020residual}). The IVX instrumentation provides a methodology such that the asymptotic distribution of the IVX-Wald based test statistic converges to a nuisance-parameter free distribution. Consider the feasible predictive regression defined with the following form 
\begin{align}
y^{\mu}_{t} = \beta x_{t-1}^{\mu} + \delta \hat{v}_t + \varepsilon_t, \ \ \  \varepsilon_t \sim_{\textit{i.i.d}} \ \text{and} \ \ t = 1,...,n. 
\end{align}
where $y^{\mu}_{t}$ and $x_{t-1}^{\mu}$ denote the demeaned variates, following the notation employed in several studies in the literature.  Moreover, $\hat{v}_t$ represents a generated covariate which in our study corresponds to the estimated $\mathsf{VaR}$ under nonstationarity.

\newpage

\begin{remark}
The parameter of interest for inference purposes is the slope coefficient of the quantile predictive regression with the augmented regressor. The IVX estimator of $\beta$, has a slower convergence rate than the conventional OLS estimator under near integration, but for low persistence it is asymptotically equivalent to the OLS estimator. However, the IVX estimator is mixed Gaussian in the limit regardless of the degree of endogeneity which implies standard inference based on the Wald test. Furthermore, we conjecture that the generated regressor since it is stationary it will not affect the convergence of the IVX estimator to its asymptotic limit, given the lower convergence rate of IVX compared to the OLS estimator. Roughly speaking our two-stage estimation procedure has some similarities with the residual-based augmentation of \cite{demetrescu2020residual}. Nevertheless the objective of our asymptotic theory analysis is to demonstrate that in finite and large samples such good properties hold under the presence of the generated covariate. 
\end{remark}

\subsection{Generated Regressor in Nonstationary Linear Predictive Regression}

Recall that the above expressions correspond to the model estimates of  linear predictive regression model with conditional mean functional form, while the risk measures require to use the quantile predictive regression framework. In any case, to keep track of our derivations we begin our analysis using the linear predictive regression model and the framework of residual-based augmentation of \cite{demetrescu2020residual}, but we replace the regressor $\hat{v}_t$ with the generated dependent variable, $\widehat{ \bar{y} }_{1,t}$, which is used as a proxy for $\mathsf{VaR}$ under time series nonstationarity

In compact form we have that 
\begin{align}
\bar{y}_{2,t} = \boldsymbol{X}_{t-1} \boldsymbol{\Gamma}^{\prime} + \varepsilon_t, \ \ \  \varepsilon_t \sim_{\textit{i.i.d}} \ \text{and} \ \ t = 1,...,n. 
\end{align}
where $\boldsymbol{X}_{t-1} = \big( \bar{\boldsymbol{x}}_{2t-1} \  \widehat{ \bar{y} }_{1,t} \big)$ and $\boldsymbol{\Gamma} = \big( \boldsymbol{\beta} \ \delta \big)$. Moreover, define with $\boldsymbol{Z}_{t-1} = \big( \boldsymbol{z}_{2t-1} \ \widehat{ \bar{y} }_{1,t}  \big)$ where $\boldsymbol{z}_{t-1}$ is the IVX instrument for $\boldsymbol{x}_{2t-1}$  and the generated regressor is self-instrumented.

The generated covariate $\widehat{ \bar{y} }_{1,t}$ corresponds to the estimated $\mathsf{VaR}$ under nonstationarity.  In practise, these nonstationary predictors can be different than the nonstationary predictors corresponding to the regressand $\widehat{y}_{2,t}$ due to the proposed dependence structure. However, for simplicity we can also consider the case in which we use the same nonstationary regressors for both predictive regression models. In this section, we obtain derivations for the linear predictive regression model. However, when estimating the risk measures of $\mathsf{VaR}$ and $\mathsf{CoVaR}$, the nonstationary quantile predictive regression models are used which implies that all parameters are quantile-dependent for a fixed quantile $\uptau \in (0,1)$. Therefore, we are interested for the consistent estimation of the $\beta_2$ estimator (e.g., via IVX) as well as constructing hypothesis testing for: (i) only $\beta$ parameter (e.g., no predictability), (ii) only $\delta$ parameter (e.g., no presence of generated regressor) and (iii) linear restrictions on all the parameters of the predictive regression model that includes both nonstationary regressors and the generated regressor (e.g., univariate generated regressor).

\newpage
The covariance estimator for $\tilde{ \boldsymbol{\beta} }_2^{ivx}$ is expressed as below
\begin{align*}
\widehat{ \mathsf{Cov} \left(  \tilde{ \boldsymbol{\beta} }_2^{ivx} \right) } 
&:=
\left( \boldsymbol{B}_n^{-1} \right) \mathbb{V}_n  \left( \boldsymbol{B}_n^{-1} \right)^{\prime}, \ \ \
\boldsymbol{B}_n := \left( \sum_{t=1}^n \bar{\boldsymbol{x}}_{t-1} \boldsymbol{z}_{t-1}^{\prime} \right)
\\
\mathbb{M}_n 
&:= 
\left( \sum_{t=1}^n \boldsymbol{z}_{t-1} \boldsymbol{z}_{t-1}^{\prime} \tilde{\varepsilon}^2_t \right) 
+  
\left[ \boldsymbol{\gamma}^{\top} \otimes \left( \frac{1}{n} \sum_{t=1}^n \boldsymbol{z}_{t-1} \bar{\boldsymbol{x}}_{t-1}^{\top} \right) \left( \frac{1}{n} \sum_{t=1}^n \bar{\boldsymbol{x}}_{t-1} \bar{\boldsymbol{x}}_{t-1}^{\top} \right)^{-1} \right]
\left( \sum_{t=1}^n \boldsymbol{\nu}_t \boldsymbol{\nu}_t^{\top} \otimes \bar{\boldsymbol{x}}_{t-1} \bar{\boldsymbol{x}}_{t-1}^{\top} \right)
\\
&\times
\left[ \boldsymbol{\gamma} \otimes \left( \frac{1}{n} \sum_{t=1}^n \bar{\boldsymbol{x}}_{t-1} \bar{\boldsymbol{x}}_{t-1}^{\top} \right)^{-1} \left( \frac{1}{n} \sum_{t=1}^n \bar{\boldsymbol{x}}_{t-1} \boldsymbol{z}_{t-1}^{\top} \right) \right].
\end{align*}
Therefore, the just-identified model for the combined IVX estimator of $\boldsymbol{\Gamma}$ is given by 
\begin{align}
\hat{\boldsymbol{\Gamma}} = \left( \sum_{t=1}^n \boldsymbol{Z}_{t-1}^{\prime} \boldsymbol{X}_{t-1} \right)^{-1} \left(  \sum_{t=1}^n \boldsymbol{Z}_{t-1}^{\prime} \bar{y}_{2,t} \right).
\end{align}  
Furthermore, notice that the nonstationary regressors $x_{t-1}$ and the generated regressor are orthogonal since the generated regressor can be considered as exogenous variable to the particular model, then we can compute the generated regressor augmented estimator and test statistic in two steps. Notice the constructed test statistic refers to the IVX estimator. Therefore, we focus on deriving the asymptotic distribution for the IVX estimator under the presence of the generated regressor in the quantile predictive regression model. The estimation procedure is implemented into two-stages, and our asymptotic theory analysis demonstrates that although the limit converges into a nonstandard distribution the particular limit is nuisance-parameter free. 
\begin{remark}
The key point in this estimation step, is that the variance-covariance needs to be adjusted to account for the first-step estimation error. In other words, when estimating the VaR using the quantile predictive regression of the first stage, the forecast at the current time period $t$, is estimated with some additional source of error. For this reason, an adjustement is necessary. Consider the estimated coefficient of the quantile predictive regression model of the first stage as
\begin{align}
Q_{\uptau} \big( y_{2t} | x_{t-1}, y_{1t} \big) = \theta X_{t-1}, \ \ \ \uptau \in (0,1).     
\end{align}
\end{remark}
The estimation step needs to account for two sources of errors, that is, the usaul estimation error in obtaining a consistent estimator that corresponds to the IVX instrumentation of the nearly integrated regressors, and the second source of error is the sampling error in generating the forecast for the VaR from the first stage of the process. More specifically, this implies that using a Bahadur representation of the QR-IVX estimator we need to determine the precise stochastic order of the remainder term when the generated regressor is included in the conditional quantile specification of the model. Although when the parameter vector from the first stage is obtained via the IVX filtration, it has the usual convergence rate that the IVX estimator has, an estimation error carries in the second-stage estimation procedure, which requires us to consider a suitable correction to the overall variance due to the presence of the generated regressor.

\newpage

Moreover, we expect that the stochastic equicontinuity property to still hold regardless of the plug-in estimation approach and the presence of time series nonstationarity.

\subsubsection{Estimation Procedure in Nonstationary Linear Predictive Regression Case}

\begin{itemize}

\item[\textbf{Step 1.}] OLS-regress $\bar{y}_{2,t}$ on the generated regressor $\widehat{ \bar{y} }_{1,t} $ and obtain the OLS estimator below
\begin{align}
\widehat{ \delta }^{ols}
=
\left( \sum_{t=1}^n \widehat{ \bar{y} }^{2} _{1,t}  \right)^{-1} \left( \sum_{t=1}^n  \widehat{ \bar{y} }_{1,t} \bar{y}_{2,t} \right). 
\end{align} 
and compute the residual such that $\widehat{\eta}_t = \bar{y}_{2,t} - \widehat{ \delta }^{ols}\widehat{ \bar{y} }_{1,t}$. Here we assume that the elements of the dependent vector are not serially correlated, which implies that $y_{1t}$ and $y_{2t}$ are uncorrelated. This simplifies the development of the asymptotic theory, despite the presence of endogeneity and highly persistent regressors.

\item[\textbf{Step 2.}] IVX-regress $\bar{y}_{2,t}$ on $\bar{\boldsymbol{x}}_{t-1}$, leading to 
\begin{align}
\widehat{ \boldsymbol{\beta} }_2^{ivx} = \left( \sum_{t=1}^n \bar{ \boldsymbol{x} }_{t-1} \boldsymbol{z}_{t-1}^{\prime} \right)^{-1} \left(  \sum_{t=1}^n \tilde{y}_t  \boldsymbol{z}_{t-1}^{\prime} \right).
\end{align}

\end{itemize}
Recall the linear predictive regression model (for node 1) is given by $y_{1,t} = \boldsymbol{\beta}_1^{\prime} \boldsymbol{x}_{t-1} + u_{1,t}$ and $\boldsymbol{x}_t = \boldsymbol{R} \boldsymbol{x}_{t-1} + \boldsymbol{v}_{1,t}$. Then we estimate the following model 
\begin{align}
\bar{y}_{2,t} &=  \boldsymbol{\beta}_2^{\prime} \bar{\boldsymbol{x}}_{t-1} + \delta \widehat{ \bar{y} }_{1,t} + u_{2,t}
\end{align}
Moreover, by OLS-regressing $\bar{y}_{2,t}$ on $\widehat{ \bar{y} }_{1,t}$ such that $\bar{y}_{2,t} = \theta \widehat{ \bar{y} }_{1,t} + \eta_t$, where $\eta_t$ is a disturbance term, we construct the transformed dependent variable $\tilde{y}_{2,t}$ given by the following expression
\begin{align}
\tilde{y}_{2,t} := \big(  \bar{y}_{2,t} - \widehat{\theta}^{ols} \widehat{ \bar{y} }_{1,t} \big)
\end{align}

\begin{itemize}
\item In the second stage of the estimation procedure, we need to re-estimate the quantile kernel function when the generated regressor is included in the design matrix, especially when constructing the Wald test under the null hypothesis of no systemic risk in the network. To obtain a measure of the performance of the testing procedure, we need to compare the standard estimation procedure without the variance-covariance matrix correction against the estimation procedure that includes the adjustment, in both cases under the null hypothesis. Then, under the alternative hypothesis of deviations, that is, existence of non-zero systemic risk effect in the network (see, \cite{katsouris2021optimal, katsouris2023statistical}), then the larger this distance is, under the adjustement then the larger the power function of the test would be when all the individual quantile predictive regression models of the system correspond to a non-zero coefficient of node-specific systemic risk. 

\item When the nonstationary regressors for the two nodes are the same then second-order effects can contribute to the covariance of the estimator (estimation bias). However, due to the proposed dependence structure (see, \cite{katsouris2021optimal, katsouris2023statistical}), these possibly nonstationary regressors are assumed to be different across the system-specific equations. Moreover, the generated covariate, denoted by, $\widehat{y}_{1,t}$, corresponds to the risk measure estimate of $\mathsf{VaR}$, which can have a direct impact on the convergence rate for consistent estimation of the model parameter $\delta$, during the second-stage estimation procedure. In summary, we need to control for two effects, that is, a bias effect on the estimate of $\delta$ and a variance effect. The bias effect will disappear when $\boldsymbol{x}_{t-1}$ and the regressors used to estimate $\widehat{y}_{1,t}$ are uncorrelated. The variance effect appears due to the choice of the generated regressors. Standard OLS estimation methods do not correct for the particular variance effect. However, the variance effect does not influence the inference methodology for the estimator $\hat{\beta}_2$, in which case our asymptotic theory analysis shall investigate that a consistent estimator is obtained regardless of the presence of nonstationarity. Thus, we use the generated covariate, $\widehat{y}_{1,t}$, from the first-stage estimation procedure as an additional regressor to the nonstationary regressors that correspond to node 2, for estimating the risk measure of $\mathsf{CoVaR}$. 
\end{itemize}
Therefore, the IVX estimator during the second-stage estimation procedure ($\mathsf{CoVaR}$)  is obtained by the following expression: 
\begin{align*}
\tilde{ \boldsymbol{\beta} }_2^{ivx} 
&= 
\left( \sum_{t=1}^n  \bar{\boldsymbol{x}}_{2,t-1}\boldsymbol{z}_{2,t-1}^{\prime} \right)^{-1} \left(  \sum_{t=1}^n  \tilde{y}_{2,t} \boldsymbol{z}_{2,t-1}^{\prime} \right)
\nonumber
\\
&=  
\left( \sum_{t=1}^n  \bar{\boldsymbol{x}}_{2,t-1}\boldsymbol{z}_{2,t-1}^{\prime} \right)^{-1} \left(  \sum_{t=1}^n  \bigg[ \boldsymbol{\beta}_2^{\prime} \bar{\boldsymbol{x}}_{2,t-1} + \delta \widehat{y}_{1,t} + u_{2,t} - \widehat{\delta}^{ols} \widehat{y}_{1,t} \bigg] \boldsymbol{z}_{2,t-1}^{\prime} \right)
\nonumber
\\
&=
\boldsymbol{\beta}_2 
- 
\left( \widehat{\delta}^{ols} -  \delta \right) \left( \sum_{t=1}^n \bar{\boldsymbol{x}}_{2,t-1}\boldsymbol{z}_{2,t-1}^{\prime}  \right)^{-1} \left(  \sum_{t=1}^n  \widehat{y}_{1,t} \boldsymbol{z}_{2,t-1}^{\prime}  \right)
+ 
\left( \sum_{t=1}^n \bar{\boldsymbol{x}}_{2,t-1}\boldsymbol{z}_{2,t-1}^{\prime}  \right)^{-1} \left( \sum_{t=1}^n u_{2,t} \boldsymbol{z}_{2,t-1}^{\prime}  \right).
\end{align*}
By rearranging we obtain that 
\begin{align*}
\bigg( \tilde{ \boldsymbol{\beta}}_2^{ivx} -  \boldsymbol{\beta}_2 \bigg)
= 
- \left( \widehat{\delta^{ols}} - \delta \right) \left( \sum_{t=1}^n \bar{\boldsymbol{x}}_{2,t-1}\boldsymbol{z}_{2,t-1}^{\prime} \right)^{-1} \left(  \sum_{t=1}^n \widehat{y}_{1,t} \boldsymbol{z}_{2,t-1}^{\prime}  \right)
+ 
\left( \sum_{t=1}^n \bar{\boldsymbol{x}}_{2,t-1}\boldsymbol{z}_{2,t-1}^{\prime} \right)^{-1} \left( \sum_{t=1}^n  u_{2,t} \boldsymbol{z}_{2,t-1}^{\prime} \right).
\end{align*}
The first term of the above expression is simplified as below 
\begin{align}
\left( \sum_{t=1}^n \bar{\boldsymbol{x}}_{2,t-1}\boldsymbol{z}_{2,t-1}^{\prime}  \right)^{-1} \left(  \sum_{t=1}^n \widehat{ \boldsymbol{\beta}}_1^{ivx} \bar{\boldsymbol{x}}_{1,t-1}  \boldsymbol{z}_{2,t-1}^{\prime} \right) 
\equiv 
\widehat{ \boldsymbol{\beta}}_1^{ivx}.
\end{align}
which holds when we have identical nonstationary regressors such that $\bar{\boldsymbol{x}}_{1,t-1} \equiv \bar{\boldsymbol{x}}_{2,t-1}$. Then,  
\begin{align}
\bigg( \tilde{ \boldsymbol{\beta}}_2^{ivx} -  \boldsymbol{\beta}_2 \bigg)
&= 
- \left( \widehat{\delta}^{ols} - \delta \right)  \times \widehat{ \boldsymbol{\beta}}_1^{ivx} 
+ 
\left( \sum_{t=1}^n \bar{\boldsymbol{x}}_{2,t-1}\boldsymbol{z}_{2,t-1}^{\prime}  \right)^{-1} \left( \sum_{t=1}^n u_{2,t} \boldsymbol{z}_{2,t-1}^{\prime}   \right).
\end{align}

\newpage

\begin{remark}
Notice that when $\boldsymbol{x}_{t-1}$ and $\widehat{y}_{1,t} $ are not correlated, then $\widehat{\theta}^{ols}$ is a consistent estimator of $\delta$. However, in the case we employ the same nonstationary predictors to estimate the value of $\widehat{y}_{1,t}$ then these two quantities are correlated which requires to consider the stochastic dominance of terms due to the presence of different effects. Nevertheless, the network dependence structure proposed by \cite{katsouris2021optimal, katsouris2023statistical} ensures identification and consistent estimation of model coefficients which are important for inference purposes. 
\end{remark}

Recall that $ \widehat{ \boldsymbol{\beta}}_1^{ivx}$ corresponds to the IVX estimator of the quantile predictive regression model for the $\mathsf{VaR}$ risk measure. From Lemma B4 of \cite{kostakis2015Robust} it holds that
\begin{align}
\frac{1}{ n^{ \frac{1 + \gamma_z }{2} } } \sum_{t=1}^n \boldsymbol{z}_{2,t-1} u_{2,t} \Rightarrow \mathcal{N} \left( 0, \sigma_{ u_2 }^2 \times \boldsymbol{V}_{c_z} \right).
\end{align} 
Therefore, by incorporating the convergence rate we have that 
\begin{align*}
\label{the.exp}
n^{ \frac{ 1 + \gamma_z }{2} } \bigg( \tilde{ \boldsymbol{\beta} }_2^{ivx} - \boldsymbol{\beta}_2 \bigg)
= 
n^{ \frac{ 1 + \gamma_z }{2} }   \left( \delta - \widehat{\delta}^{ols} \right)  \times  \widehat{ \boldsymbol{\beta}}_1^{ivx}
+ 
\left(  \frac{1}{ n^{ 1 + \gamma_z } }    \sum_{t=1}^n \bar{\boldsymbol{x}}_{2,t-1}\boldsymbol{z}_{2,t-1}^{\prime} \right)^{-1} \left( \frac{1}{ n^{ \frac{ 1 + \gamma_z }{2} } } \sum_{t=1}^n \boldsymbol{z}_{2,t-1} u_{2,t} \right).
\end{align*}
Thus, for the LUR regressor case we have that the second term of expression \eqref{the.exp}  converges to
\begin{align*}
&\left(  \frac{1}{ n^{ 1 + \gamma_z } }   \sum_{t=1}^n \bar{\boldsymbol{x}}_{2,t-1}\boldsymbol{z}_{2,t-1}^{\prime} \right)^{-1} \left( \frac{1}{ n^{ \frac{ 1 + \gamma_z }{2} } } \sum_{t=1}^n \boldsymbol{z}_{2,t-1} u_{2,t} \right)
\Rightarrow
-
\left( \boldsymbol{\Omega}_{uu} + \int_0^1 \boldsymbol{J}^{\mu}_c  d J^{\prime}_c \right) \boldsymbol{C}_z^{-1}  \times  \mathcal{N} \left( 0, \sigma_{ u_2 }^2 \times \boldsymbol{V}_{C_z} \right).
\end{align*}
To see this, from the Appendix of KMS expression (27) gives that 
\begin{align}
\left(  \frac{1}{ n^{ 1 + \gamma_z } } \sum_{t=1}^n \bar{\boldsymbol{x}}_{2,t-1}\boldsymbol{z}_{2,t-1}^{\prime} \right) 
=  
\bigg( \boldsymbol{\Omega}_{uu} + \boldsymbol{V}_C \boldsymbol{C} \bigg) \boldsymbol{C}_z^{-1}  + o_p(1).
\end{align}
where the stochastic covariance matrices are defined as below
\begin{align}
\boldsymbol{V}_{C} = \int_0^{\infty} e^{r \boldsymbol{C} } \boldsymbol{\Omega}_{uu} e^{r \boldsymbol{C} } dr \ \ \ \text{and} \ \ \  \boldsymbol{V}_{C_z} = \int_0^{\infty} e^{r \boldsymbol{C}_z } \boldsymbol{\Omega}_{uu} e^{r \boldsymbol{C}_z } dr.
\end{align}
Notice that by \cite{kostakis2015Robust} we have that $\boldsymbol{\Psi}_{uu} = \left( \boldsymbol{\Omega}_{uu} + \displaystyle \int_0^1 \boldsymbol{J}^{\mu}_c  d J^{\prime}_c \right)$ since we consider local unit root regressors. Therefore, similar to Theorem A (i) of KMS it holds that 
\begin{align}
\left(  \frac{1}{ n^{ 1 + \gamma_z } }  \sum_{t=1}^n \bar{\boldsymbol{x}}_{2,t-1}\boldsymbol{z}_{2,t-1}^{\prime} \right)^{-1} \left( \frac{1}{ n^{ \frac{ 1 + \gamma_z }{2} } } \sum_{t=1}^n \boldsymbol{z}_{2,t-1} u_{2,t} \right) 
\nonumber
\Rightarrow
\mathcal{MN} \bigg( 0, \sigma_{ u_2 }^2 \times \left( \boldsymbol{\Psi}_{u_2u_2}^{-1} \right)^{\prime} \boldsymbol{C}_z  \boldsymbol{V}_{C_z} \boldsymbol{C}_z \boldsymbol{\Psi}_{u_2 u_2 }^{-1} \bigg)
\end{align}
which is a mixed Gaussian random variate since the covariance matrix is a function of the OU process. Next we investigate the asymptotic behaviour of the first term of expression \eqref{the.exp}: 
\begin{align}
\mathcal{A} := 
 n^{ \frac{ 1 + \gamma_z }{2} }  \left( \delta - \widehat{\delta}^{ols} \right)  \times  \widehat{ \boldsymbol{\beta}}_1^{ivx}. 
\end{align}

\newpage

Specifically, assuming that node 1 and 2 have a set of nonstationary regressors that are generated from non-identical stochastic processes (i.e., not identical regressors), then it follows that
\begin{align}
\mathcal{A} 
&=  
\left( \sum_{t=1}^n \bar{\boldsymbol{x}}_{2,t-1}\boldsymbol{z}_{2,t-1}^{\prime}  \right)^{-1} \left(  \sum_{t=1}^n  \bar{\boldsymbol{x}}_{1,t-1}  \boldsymbol{z}_{2,t-1}^{\prime} \right) 
\times 
\widehat{  \boldsymbol{\beta}_1^{ivx} }
\nonumber
\\
&=
\left( \sum_{t=1}^n \bar{\boldsymbol{x}}_{2,t-1}\boldsymbol{z}_{2,t-1}^{\prime}  \right)^{-1} \left(  \sum_{t=1}^n  \bar{\boldsymbol{x}}_{1,t-1}  \boldsymbol{z}_{2,t-1}^{\prime} \right) 
\nonumber
\times
\left[ \boldsymbol{\beta}_1 
+ 
\left(  \sum_{t=1}^n \bar{\boldsymbol{x}}_{1,t-1} \boldsymbol{z}_{1,t-1}^{\prime} \right)^{-1} \left(  \sum_{t=1}^n \bar{u}_{1,t} \boldsymbol{z}_{1,t-1}^{\prime}  \right)  \right].
\end{align}
Regarding the asymptotic behaviour of the first term of the expression we have that the IVX estimator from the quantile predictive regression that corresponds to the VaR risk measure is 
\begin{align*}
\widehat{ \boldsymbol{\beta} }_1^{ivx} 
&=  
\left(  \sum_{t=1}^n \bar{\boldsymbol{x}}_{1,t-1} \boldsymbol{z}_{1,t-1}^{\prime} \right)^{-1} \left(  \sum_{t=1}^n y_{1,t} \boldsymbol{z}_{1,t-1}^{\prime} \right),  \ \ \ \bar{y}_{1,t} = \boldsymbol{\beta}_1^{\prime} \bar{\boldsymbol{x}}_{1,t-1} + \bar{u}_{1,t}
\\
&= 
\boldsymbol{\beta}_1 
+ 
\left(  \sum_{t=1}^n \bar{\boldsymbol{x}}_{1,t-1} \boldsymbol{z}_{1,t-1}^{\prime} \right)^{-1} \left(  \sum_{t=1}^n \bar{u}_{1,t} \boldsymbol{z}_{1,t-1}^{\prime}  \right). 
\end{align*}
Moreover, the number of nonstationary regressors for both econometric specifications is the same although these two sets of regressors possibly have different nuisance parameters of persistence. The limiting distributions for their IVX estimators have the same dimensions such that 
\begin{align*}
n^{ \frac{ 1 + \gamma_z }{2} } \left( \widehat{ \boldsymbol{\beta} }_1^{ivx} - \boldsymbol{\beta}_1 \right)
&= 
\left(  \frac{1}{ n^{ 1 + \gamma_z } }  \sum_{t=1}^n \bar{\boldsymbol{x}}_{1,t-1} \boldsymbol{z}_{1,t-1}^{\prime} \right)^{-1} \left(\frac{1}{ n^{ \frac{ 1 + \gamma_z }{2} } } \sum_{t=1}^n \bar{u}_{1,t} \boldsymbol{z}_{1,t-1}^{\prime}  \right) 
\\
&\Rightarrow
\mathcal{MN} \bigg( 0, \sigma_{ u_1 }^2 \times \left( \boldsymbol{\Psi}_{u_1u_1}^{-1} \right)^{\prime} \boldsymbol{C}_z  \boldsymbol{V}_{C_z} \boldsymbol{C}_z \boldsymbol{\Psi}_{u_1u_1}^{-1} \bigg).
\end{align*}
Rearranging the above expressions, we obtain the following limit result
\begin{align}
 n^{ \frac{ 1 + \gamma_z }{2} }  \bigg[ \bigg( \tilde{ \boldsymbol{\beta} }_2^{ivx} - \boldsymbol{\beta}_2 \bigg) - \boldsymbol{\beta}_1 \bigg] 
&= 
\left( \delta - \widehat{\delta}^{ols} \right)  \times \mathcal{MN} \bigg( 0, \sigma_{ u_1 }^2 \times \left( \boldsymbol{\Psi}_{u_1u_1}^{-1} \right)^{\prime} \boldsymbol{C}_z  \boldsymbol{V}_{C_z} \boldsymbol{C}_z \boldsymbol{\Psi}_{u_1u_1}^{-1} \bigg)
\nonumber
\\
&\ \ \ \ + 
\mathcal{MN} \bigg( 0, \sigma_{ u_2 }^2 \times \left( \boldsymbol{\Psi}_{u_2u_2}^{-1} \right)^{\prime} \boldsymbol{C}_z  \boldsymbol{V}_{C_z} \boldsymbol{C}_z \boldsymbol{\Psi}_{u_2 u_2 }^{-1} \bigg).
\end{align}
In this paper we are not proposing any novel solution to the endogeneity issue in quantile predictive regressions models. We employ a well-investigated method in the literature that tackles the endogeneity problem, meaning it produces weak convergence results into standard asymptotic distributions (see, \cite{lee2016predictive}, \cite{fan2019predictive}). Alternative approaches that tackle the endogeneity problem and produce uniform valid inference regardless of the unknown persistence properties are proposed in the frameworks of \cite{cai2023new} and \cite{liu2023unified}.  In contrast, our study provides a general framework and establish the asymptotic properties for quantile predictive regression models with a generated regressor, which is particularly useful when jointly estimating the risk measure pair of $( \mathsf{VaR}, \mathsf{CoVaR} )$ under the presence of time series nonstationarity. 

\newpage

In fact, regardless of the estimation method employed to robustify the quantile-based model parameters to the unknown persistence, when estimating the $\mathsf{CoVaR}$ as per \citep{AB2016covar} and \cite{hardle2016tenet}, under time series nonstationarity (LUR parametrization), similar econometric issues need to be tackled to ensure relevant statistical properties still hold.

\subsubsection{Asymptotic Properties of Stage-2 Estimator}

Next, consider for a moment the limiting distribution of $\left( \widehat{\delta}^{ols} - \delta \right)$ where $\bar{y}_{2,t} =  \delta  \widehat{ y }_{1,t}  + \eta_t$. Therefore, the OLS estimator of $ \delta$ is given by $\widehat{ \delta }^{ols} = \displaystyle \left( \sum_{t=1}^n \widehat{y}^{2} _{1,t} \right)^{-1} \left( \sum_{t=1}^n    \bar{y}_{2,t} \widehat{y}_{1,t}  \right)$, where $\widehat{y}_{1,t} :=  \widehat{\boldsymbol{\beta}}_1^{\prime ivx} \boldsymbol{x}_{1,t-1}$ which implies that it can be expressed as below 
\begin{align}
\widehat{ \delta }^{ols} 
=
\left( \sum_{t=1}^n \widehat{y}^{2} _{1,t} \right)^{-1} \left( \sum_{t=1}^n \bar{y}_{2,t}  \widehat{\boldsymbol{\beta}_1^{\prime ivx} \boldsymbol{x}_{1,t-1} } \right)
=
\left( \sum_{t=1}^n  \widehat{y}^{2} _{1,t} \right)^{-1} \left( \sum_{t=1}^n  \bar{y}_{2,t} \boldsymbol{x}_{1,t-1} \right) \widehat{\boldsymbol{\beta}_1^{\prime ivx} }.
\end{align}
However, $\bar{y}_{2,t} =  \boldsymbol{\beta}_2^{\prime} \bar{\boldsymbol{x}}_{2,t-1}  + \delta \widehat{y}_{1,t} + u_{2,t}$ which implies that 
\begin{align*}
\widehat{ \delta }^{ols}
&=
\left( \sum_{t=1}^n  \widehat{y}^{2} _{1,t}  \right)^{-1} 
\left( \sum_{t=1}^n \bigg[ \boldsymbol{\beta}_2^{\prime} \bar{\boldsymbol{x}}_{2,t-1}  + \delta \widehat{y}_{1,t} + u_{2,t} \bigg] \boldsymbol{x}_{1,t-1}^{\prime} \right) \widehat{\boldsymbol{\beta}_1^{ivx} }
\\
&=
\left( \sum_{t=1}^n \widehat{y}_{1,t} \widehat{y}_{1,t}^{\prime} \right)^{-1} 
\times 
\left\{ \boldsymbol{\beta}_2^{\prime} \left( \sum_{t=1}^n \bar{\boldsymbol{x}}_{2,t-1} \boldsymbol{x}_{1,t-1}^{\prime} \right) + \delta \left( \sum_{t=1}^n  \widehat{y}_{1,t}  \boldsymbol{x}_{1,t-1}^{\prime}  \right) 
\left( \sum_{t=1}^n u_{2,t} \boldsymbol{x}_{1,t-1}^{\prime} \right) \right\} \widehat{\boldsymbol{\beta}_1^{ivx} }
\end{align*}
where 
\begin{align*}
\left( \sum_{t=1}^n  \widehat{y}_{1,t}  \widehat{y}_{1,t}^{\prime} \right)^{-1} 
= 
\left( \sum_{t=1}^n  \widehat{\boldsymbol{\beta}_1^{\prime ivx} \boldsymbol{x}_{1,t-1} \boldsymbol{x}_{1,t-1}^{\prime} \widehat{\boldsymbol{\beta}}_1^{ ivx}  } \right)^{-1} 
=
 \left( \widehat{\boldsymbol{\beta}}_1^{ ivx} \right)^{-1} \left( \sum_{t=1}^n  \boldsymbol{x}_{1,t-1} \boldsymbol{x}_{1,t-1}^{\prime} \right)^{-1} \left( \widehat{\boldsymbol{\beta}}_1^{\prime ivx} \right)^{-1}
\end{align*}
Therefore, we obtain that 
\begin{align*}
\widehat{ \delta }^{ols}
=
\left( \sum_{t=1}^n  \widehat{y}^{2} _{1,t}  \right)^{-1} \boldsymbol{\beta}_2^{\prime} \left( \sum_{t=1}^n \bar{\boldsymbol{x}}_{2,t-1} \boldsymbol{x}_{1,t-1}^{\prime} \right)  \widehat{\boldsymbol{\beta}_1^{ivx} }
+
\left( \sum_{t=1}^n  \widehat{y}^{2} _{1,t} \right)^{-1} \delta \left( \sum_{t=1}^n  \widehat{y}_{1,t}  \boldsymbol{x}_{1,t-1}^{\prime}  \right) \widehat{\boldsymbol{\beta}_1^{ivx} }
+
\left( \sum_{t=1}^n  \widehat{y}^{2} _{1,t} \right)^{-1} \left( \sum_{t=1}^n u_{2,t} \boldsymbol{x}_{1,t-1}^{\prime} \right) \widehat{\boldsymbol{\beta}_1^{ivx} }
\end{align*}
Notice that the second term above becomes: 
\begin{align*}
\left( \sum_{t=1}^n  \widehat{y}^{2} _{1,t} \right)^{-1} \delta \left( \sum_{t=1}^n   \widehat{\boldsymbol{\beta}_1^{\prime ivx} \boldsymbol{x}_{1,t-1} \boldsymbol{x}_{1,t-1}^{\prime} \widehat{\boldsymbol{\beta}}_1^{ ivx}  } \right)
=
\delta \left( \sum_{t=1}^n  \widehat{\boldsymbol{\beta}_1^{\prime ivx} \boldsymbol{x}_{1,t-1} \boldsymbol{x}_{1,t-1}^{\prime} \widehat{\boldsymbol{\beta}}_1^{ ivx}  } \right)^{-1}  \left( \sum_{t=1}^n  \widehat{\boldsymbol{\beta}_1^{\prime ivx} \boldsymbol{x}_{1,t-1} \boldsymbol{x}_{1,t-1}^{\prime} \widehat{\boldsymbol{\beta}}_1^{ ivx}  } \right)
= 
\delta
\end{align*}
Thus, we have that
\begin{align*}
\left( \widehat{ \delta }^{ols}  - \delta \right)
=
\left( \sum_{t=1}^n  \widehat{y}^{2} _{1,t} \right)^{-1} \boldsymbol{\beta}_2^{\prime} \left( \sum_{t=1}^n \bar{\boldsymbol{x}}_{2,t-1} \boldsymbol{x}_{1,t-1}^{\prime} \right)  \widehat{\boldsymbol{\beta}}_1^{ivx} 
+
\left( \sum_{t=1}^n \widehat{y}^{2} _{1,t} \right)^{-1} \left( \sum_{t=1}^n u_{2,t} \boldsymbol{x}_{1,t-1}^{\prime} \right) \widehat{\boldsymbol{\beta}_1^{ivx} }.
\end{align*}

\newpage

Furthermore, since the nonstationary regressors of node 1 are not correlated with the nonstationary regressors of node 2, then the term $\underset{ n \to \infty }{ \mathsf{plim} } \left( \sum_{t=1}^n u_{2,t} \boldsymbol{x}_{1,t-1}^{\prime} \right)  = 0$, which implies that 
\begin{align*}
\left( \widehat{ \delta }^{ols} - \delta \right)
&=
\left( \sum_{t=1}^n \widehat{y}^{2} _{1,t} \right)^{-1} \boldsymbol{\beta}_2^{\prime} \left( \sum_{t=1}^n \bar{\boldsymbol{x}}_{2,t-1} \boldsymbol{x}_{1,t-1}^{\prime} \right) \times  \widehat{\boldsymbol{\beta}_1^{ivx} }
\\
&=
\left( \sum_{t=1}^n  \widehat{y}^{2} _{1,t} \right)^{-1} \boldsymbol{\beta}_2^{\prime} \left(  \mathbb{E} \big[ \boldsymbol{x}_{2,t-1} \boldsymbol{x}_{1,t-1}^{\prime} \big] \right) 
\times \left[ \boldsymbol{\beta}_1 + \left(  \sum_{t=1}^n \bar{\boldsymbol{x}}_{1,t-1} \boldsymbol{z}_{1,t-1}^{\prime} \right)^{-1} \left(  \sum_{t=1}^n \bar{u}_{1,t} \boldsymbol{z}_{1,t-1}^{\prime}  \right) \right]
\\
&=
\left( \sum_{t=1}^n  \widehat{y}^{2} _{1,t} \right)^{-1} \boldsymbol{\beta}_2^{\prime} \left(  \mathbb{E} \big[ \boldsymbol{x}_{2,t-1} \boldsymbol{x}_{1,t-1}^{\prime} \big] \right) \boldsymbol{\beta}_1   
+ \left( \sum_{t=1}^n  \widehat{y}^{2} _{1,t} \right)^{-1} \boldsymbol{\beta}_2^{\prime} \left(  \mathbb{E} \big[ \boldsymbol{x}_{2,t-1} \boldsymbol{x}_{1,t-1}^{\prime} \big] \right)
\\
&\ \ \ \ \ \ \ \ \times 
\left[  \left(  \sum_{t=1}^n \bar{\boldsymbol{x}}_{1,t-1} \boldsymbol{z}_{1,t-1}^{\prime} \right)^{-1} \left(  \sum_{t=1}^n \bar{u}_{1,t} \boldsymbol{z}_{1,t-1}^{\prime}  \right) \right] + o_p(1).
\end{align*}
Since the term $\left( \sum_{t=1}^n \widehat{y}^{2} _{1,t} \right)$ is always positive then we assume that $\underset{ n \to \infty }{ \mathsf{plim} } \left( \sum_{t=1}^n  \widehat{y}^{2} _{1,t} \right)^{-1} = \mathcal{K}$ where $\mathcal{K}$ is some positive constant. Therefore, putting all related expressions together we obtain
\begin{align*}
\label{the.expression2}
n^{ \frac{ 1 + \gamma_z }{2} } \bigg[ \bigg( \tilde{ \boldsymbol{\beta} }_2^{ivx} - \boldsymbol{\beta}_2 \bigg) - \boldsymbol{\beta}_1 \bigg] 
\nonumber
&= 
\bigg\{ \mathcal{K} \boldsymbol{\beta}_2^{\prime} \left(  \mathbb{E} \big[ \boldsymbol{x}_{2,t-1} \boldsymbol{x}_{1,t-1}^{\prime} \big] \right) \boldsymbol{\beta}_1   
+ \mathcal{K} \boldsymbol{\beta}_2^{\prime} \left(  \mathbb{E} \big[ \boldsymbol{x}_{2,t-1} \boldsymbol{x}_{1,t-1}^{\prime} \big] \right) 
\\
& \ \times \mathcal{MN} \bigg( 0, \sigma_{ u_1 }^2 \times \left( \boldsymbol{\Psi}_{u_1u_1}^{-1} \right)^{\prime} \boldsymbol{C}_z  \boldsymbol{V}_{C_z} \boldsymbol{C}_z \boldsymbol{\Psi}_{u_1u_1}^{-1} \bigg)  \bigg\}
\nonumber
\\
& \ \times \mathcal{MN} \bigg( 0, \sigma_{ u_1 }^2 \times \left( \boldsymbol{\Psi}_{u_1u_1}^{-1} \right)^{\prime} \boldsymbol{C}_z  \boldsymbol{V}_{C_z} \boldsymbol{C}_z \boldsymbol{\Psi}_{u_1u_1}^{-1} \bigg)
\nonumber
\\
&+ 
\mathcal{MN} \bigg( 0, \sigma_{ u_2 }^2 \times \left( \boldsymbol{\Psi}_{u_2u_2}^{-1} \right)^{\prime} \boldsymbol{C}_z  \boldsymbol{V}_{C_z} \boldsymbol{C}_z \boldsymbol{\Psi}_{u_2 u_2 }^{-1} \bigg)
\nonumber
\\
&=
\mathcal{K} \boldsymbol{\beta}_2^{\prime} \left(  \mathbb{E} \big[ \boldsymbol{x}_{2,t-1} \boldsymbol{x}_{1,t-1}^{\prime} \big] \right) \boldsymbol{\beta}_1 \times \mathcal{MN} \bigg( 0, \sigma_{ u_1 }^2 \times \left( \boldsymbol{\Psi}_{u_1u_1}^{-1} \right)^{\prime} \boldsymbol{C}_z  \boldsymbol{V}_{C_z} \boldsymbol{C}_z \boldsymbol{\Psi}_{u_1u_1}^{-1} \bigg)
\nonumber
\\
&+
\mathcal{K} \boldsymbol{\beta}_2^{\prime} \left(  \mathbb{E} \big[ \boldsymbol{x}_{2,t-1} \boldsymbol{x}_{1,t-1}^{\prime} \big] \right) \times \bigg\{ \mathcal{MN} \bigg( 0, \sigma_{ u_1 }^2 \times  \left( \boldsymbol{\Psi}_{u_1u_1}^{-1} \right)^{\prime} \boldsymbol{C}_z  \boldsymbol{V}_{C_z} \boldsymbol{C}_z \boldsymbol{\Psi}_{u_1u_1}^{-1} \bigg)  \bigg\}^2
\nonumber
\\
&+
\mathcal{MN} \bigg( 0, \sigma_{ u_2 }^2 \times \left( \boldsymbol{\Psi}_{u_2u_2}^{-1} \right)^{\prime} \boldsymbol{C}_z  \boldsymbol{V}_{C_z} \boldsymbol{C}_z \boldsymbol{\Psi}_{u_2 u_2 }^{-1} \bigg).
\end{align*}
The second term above corresponds to a squared mixed Gaussian distribution which gives a form of a $\chi^2$ distribution. Thus, the limiting distribution of \eqref{the.expression2} is a linear combination of a $\chi^2$ distribution and a mixed Gaussian distribution, which is a Generalized $\chi^2$ distribution. Thus the $y_t$ inherits the properties of $x_t$, through a cointegrating relation, especially our goal is to have a stationary innovation sequence such that $u_t \sim I(0)$. Under the assumption of a consistent estimator for $\hat{\delta}^{ols}$ then we have convergence in probability to zero such that $\left( \hat{\delta}^{ols} - \delta \right) \overset{p}{\to} 0$. Thus, the first term of the expression asymptotically tends to zero (negligible) which implies that the second term converges into a mixed Gaussian distribution.

\newpage

\section{Doubly Corrected Estimation Method: Conditional Quantile Case}
\label{Section4}

\subsection{Estimation based on a Doubly IVX Corrected Procedure}

In this section we consider the double corrected conditional quantile estimation methodology which is the main focus of our research study. The derivations presented in the previous sections were useful to shed light on some challenges we have to overcome to develop a robust framework in the proposed setting of quantile predictive regression models. In terms of estimation approach we focus on the QR estimators under nonstationarity which are adjusted using the IVX filtration. During the first stage of the estimation procedure in the same spirit as several studies in the literature related to the joint estimation of the risk measures of $( \mathsf{VaR}, \mathsf{CoVaR} )$ (see, \cite{AB2016covar}, \cite{hardle2016tenet} and \cite{patton2019dynamic} among others), the $\mathsf{VaR}$ is estimated given a fixed quantile $\tau \in (0,1)$ level but using information on the nonstationary properties of regressors as given by the quantile predictive regression system below
\begin{itemize}

\item[\textbf{Stage 1:}]
\begin{align}
y_{t}^{(1)} &= \alpha_{1} ( \tau ) + \boldsymbol{\beta}_1^{\prime} (\tau) \boldsymbol{x}_{t-1}^{(1)} + u_{t}^{(1)} (\tau), \ \ \ \text{for} \ \ \ t = 1,...,n
\\
\boldsymbol{x}_{t}^{(1)} &= \boldsymbol{R}_{n}^{(1)} \boldsymbol{x}^{(1)}_{t-1} + \boldsymbol{v}^{(1)}_{t}
\end{align}
such that $\alpha_{1} ( \tau ) + \boldsymbol{\beta}^{\prime} (\tau) \boldsymbol{x}_{t-1}^{(1)}$ is the $\tau-$conditional quantile of $y_t$ given $\boldsymbol{x}_{t-1}$, where the unknown parameters $( \alpha_{1} ( \tau ), \boldsymbol{\beta}_1 ( \tau) )$ are estimated using the QR objective function.

\item[\textbf{Stage 2:}]
\begin{align}
y_{t}^{(2)} &= \alpha_{2} ( \tau ) + \boldsymbol{\beta}_2^{\prime} (\tau) \boldsymbol{x}_{t-1}^{(2)} + \delta (\tau) \widehat{y}_{1,t} \left(  \hat{\boldsymbol{\theta}}^{ivx}_1 (\uptau)    \right)   + u_{t}^{(2)} (\tau), \ \ \ \text{for} \ \ \ t = 1,...,n
\\
\boldsymbol{x}_{t}^{(2)} &= \boldsymbol{R}_{n}^{(2)} \boldsymbol{x}^{(2)}_{t-1} + \boldsymbol{v}^{(2)}_{t}
\end{align}
where the parameter of interest corresponds to the estimator $\boldsymbol{\vartheta}^{ivx}_n = \big( \alpha_2 (\tau), \boldsymbol{\beta}_2^{\prime} (\tau), \delta (\tau) \big)$.

\end{itemize}

\begin{remark}
Similar to the remark found in \cite{wang2016conditional}, we point out that it is up to the practitioner to determine the specific model and parameter estimation method, which are the starting point to carry out any subsequent $\mathsf{CoVaR}$ estimation and inference.  In our study, we use the nonstationary quantile predictive regression when obtaining an estimate for $\mathsf{CoVaR}$. Thus, our approach gives a $\sqrt{n^{1 + \delta}}-$consistent estimator for the unknown coefficient of systemic risk based on the IVX filter. The particular optimization function corresponds to the second-stage procedure which gives the model estimates to construct the one-period ahead forecasted $\mathsf{CoVaR}$ risk measure.  However, our approach differs from the estimation method of \cite{AB2016covar} and \cite{hardle2016tenet} since nonstationary time series data are modelled via the LUR parametrization and thus we employ the QR-IVX estimator proposed by  \cite{lee2016predictive} but with the additional generated regressor from the first-stage procedure. 
\end{remark}

\newpage

Next, we focus on the required preliminary theory for establishing the consistency and asymptotic mixed Gaussianity of the doubly IVX corrected estimator $\boldsymbol{\beta}^{ivx} (\uptau)$. Before proceeding with our asymptotic theory analysis, it is worth mentioning some related studies on the aspect of generated regressor in quantile regressions. In particular, in comparison to the framework of \cite{bhattacharya2020quantile}  (see, also \cite{chen2021quantile}), our generated covariate represents a proxy for $\mathsf{VaR}$ under time series nonstationarity. During the second-stage procedure this risk proxy is augmented within a nonstationary quantile predictive regression along with other nonstationary regressors. In addition, to ensure robustness to the abstract degree of persistence we employ the IVX estimator rather than the OLS estimator thereby avoiding the necessity to use the control function approach as a method for correcting size distortions under the null hypothesis of no predictability due to the presence of the nuisance parameter of persistence. Moreover, similar to the illustrative case of the doubly corrected estimation for the conditional mean case we present results for bivariate systems of quantile predictive regression models. 

Our econometric environment can be extended to the multivariate case. In that case, we can develop asymptotic theory and estimation techniques for inference under two large-sample regimes such that $(a)$ with increasing time sample size, $n \to \infty$, and fixed network dimension, denoted by $m$, and $(b)$ with $m \to \infty$ and $n_m \to \infty$, where the temporal size depends on $m$. In other words, the case in which both indices tend to infinity has certain challenges when deriving the asymptotic properties or such complex tail dependent processes.  To the best of our knowledge our study tackles the problem of general asymptotic inference with increasing dimension network time series models under nonstationarity which is a novel aspect to the literature. Suitable Bahadur-based representations for establishing asymptotic theory results are considered in several studies (see, \cite{portnoy2012nearly} and \cite{wang2016conditional}). Within our setting we have the additional complexity that the quantile-based estimators in each stage has to be robustify against the unknown persistence properties that appear due to the LUR parametrization of regressors (e.g., see \cite{lee2016predictive}).

In relation to technical conditions, the notion of stochastic equicontinuity can be employed in nonstationary time series environments.  These conditions are not uncommon in the cointegrating regression literature such as the restricted cointegrated regression model of \cite{saikkonen1995problems} and the constrained least squares estimator proposed by \cite{nagaraj1991estimation} (see,  \cite{moon2002minimum}). Furthermore, as  pointed out by \cite{saikkonen1995problems}, the classical stochastic equicontinuity conditions can be too strong in cointegrated regression models. In particular, due to the stationarity of the generated regressor we shall assume that the parameter space in the nonstationary quantile predictive regression model in the second-stage procedure corresponds to a short-run parameter while the coefficient of the nonstationary regressors corresponds to the long-run parameters of the model. This can be done by  partitioning $\theta = ( \theta_1, \theta_2  )$ and considering a standardization matrix such that 
\begin{align}
N_n ( \theta_1, \delta  ) = \big\{  \boldsymbol{\theta}_1^{*} \in \Theta_1 : \norm{  \boldsymbol{D}_{1n} \left( \boldsymbol{\theta}_1^{*}  - \boldsymbol{\theta}_1 \right)  }  < \delta \big\},
\end{align}

\newpage

where $\boldsymbol{D}_{1n}$ is $( k_1 \times k_1  )$ nonstochastic diagonal matrix whose diagonal elements are positive and increasing functions of $n$. Moreover, these SE conditions are considered to be sufficient conditions for establishing the weak convergence of such estimators while necessary conditions imply the inverse information matrix has eigenvalues bounded away from zero in probability. Therefore, in the case of integrated and cointegrated variables, \cite{saikkonen1995problems} show that the standardized Hessian matrix $\displaystyle  \boldsymbol{D}_{ \boldsymbol{\theta} n  }^{-1}  \frac{ \partial^2 \bar{Q}_n( \boldsymbol{\theta}_0 )   }{   \partial \boldsymbol{\theta} \partial \boldsymbol{\theta}^{\prime} }  \boldsymbol{D}_{ \boldsymbol{\theta} n  }^{-1}$ converges weakly and due to the strict exogeneity assumption, the weak limit is block diagonal between the parameters $\boldsymbol{\theta}_1$, $\psi$ and $\alpha$. The next step would be to show that the standardized Hessian satisfies an appropriate stochastic equicontinuity condition, that along with a suitable consistency result, makes it possible to obtain the limiting distribution. Roughly speaking, OLS-based residuals from the first-stage procedure depend on the nuisance parameter of persistence thus stochastic equicontinuity arguments need to be extended to a suitable probability space that accounts for dynamic misspecification (see,  \cite{kasparis2012dynamic}). 
 
\begin{proposition}
Suppose that the parameter vector of interest is given by the optimization function 
\begin{align}
\widehat{ \boldsymbol{\vartheta} } (\uptau) := \big(  \hat{\alpha} (\uptau), \hat{ \boldsymbol{\beta} }^{\prime}_x (\uptau), \hat{ \boldsymbol{\beta} }^{\prime}_{ \hat{y}  }  (\uptau) \big)^{\prime} \equiv \underset{ \boldsymbol{\vartheta} \in \Theta  }{\mathsf{arg \ min} } \ \sum_{t=1}^n \rho_{\uptau} \big(  y_t - \boldsymbol{X}_t^{\prime} \boldsymbol{\vartheta}  \big)
\end{align}
where $\boldsymbol{X}_t = \big[  1, \boldsymbol{x}_{t-1},  \hat{y}_t  \big]$. Assume that the matrix $\boldsymbol{D}_0$ exists and is positive definite, then as $N,T \to \infty$ for any quantile level $\uptau \in (0,1)$, the following limit results holds 
\begin{align}
\label{limit}
\sqrt{N} \left(  \hat{\boldsymbol{\phi} } ( \cdot ) - \boldsymbol{\phi}_0 ( \cdot  ) \right) \overset{ d}{\to} \boldsymbol{D}_{1NT} (\cdot)^{-1} \mathbb{G}_N \big(  \psi (  \theta_i - \boldsymbol{x}_i^{\prime} \boldsymbol{\phi}_0 ( \cdot  )  )  \big) +o_p(1) \Rightarrow \mathbb{G}_N (\cdot) \in \mathcal{\ell}^{\infty} (\mathcal{T}),
\end{align}
where $\mathbb{G} ( \cdot )$ is a zero mean Gaussian process with covariance kernel $\mathbb{E} \big[ \mathbb{G}(\uptau_1) \mathbb{G}^{\prime}(\uptau_2) \big] $ such that $\mathcal{T} := [ \uptau_1, \uptau_2  ]$. Recall that the weak convergence result holds because the stochastic process is asymptotically tight, and, therefore, converge weakly in $\mathcal{L}^{\infty}$, where  $\mathcal{L}^{\infty}$ denotes the set of all uniformly bounded real functions on $\chi$ equipped with the uniform norm. 

\medskip

A special case to the weak convergence result given by expression \eqref{limit} is to consider the weak convergence of the quantile regression process $\hat{\boldsymbol{\phi} } ( \cdot )$, where for any fixed quantile level $\uptau \in (0,1)$ it holds that as $ (N,T) \to \infty$, 
\begin{align*}
\sqrt{N} \left(  \hat{\boldsymbol{\phi} } ( \uptau ) - \boldsymbol{\phi}_0 ( \uptau ) \right) \overset{  d}{\to} \mathcal{N} \left(  0, \uptau ( 1 - \uptau ) \boldsymbol{D}_{1} \boldsymbol{D}_0^{-1} \boldsymbol{D}_{1}\right).
\end{align*}
Define with $\mathbb{G} (\uptau)$ is a zero mean Gaussian process with covariance kernel 
\begin{align}
\mathbb{E} \big[ \mathbb{G} (\uptau) \mathbb{G}^{\prime} (\uptau^{\prime})  \big] = \boldsymbol{\Xi} (\uptau)^{-1}(\uptau) S(\uptau, \uptau^{\prime} )  \boldsymbol{\Xi} (\uptau)^{-1}(\uptau) \boldsymbol{D}_0^{-1}.
\end{align}
\end{proposition}

\newpage

\begin{remark}
Recall that a key condition for estimating the quantile-dependent cointegrating vector is that there exists a zero correlation between regressors and errors (see,  \cite{xiao2009quantile}).   Moreover, cases where estimators are derived under the assumption large $N,T$ asymptotics include the frameworks of \cite{zhu2019network}, \cite{chen2022two} and \cite{feng2024estimation} among others. Nevertheless, the above derivations provide a good example on the possible system representation we can employ in the case we consider the generated regressors that correspond to the estimated $\mathsf{VaR}$ (under time series nonstationarity) across the cross-section.  The above example which corresponds to the ARDL process as in the study of \cite{cho2015quantile} is presented also by  \cite{katsouris2023quantile}. 
\end{remark}

\begin{example}[see, \cite{cho2015quantile}]
Consider the ARDL process denoted by 
\begin{align*}
Y_t = \alpha + \sum_{j=1}^p \phi_j Y_{t-j} + \sum_{j=0}^q \boldsymbol{\theta}_j^{\prime} \boldsymbol{X}_{t-j} + U_t,
\end{align*}
where $ \boldsymbol{X}_t \in \mathbb{R}^k$ is a $k-$dimensional integrated process of stationary and ergodic processes. Moreover, let the $\tau-$quantile of $Y_t$ conditional on the $\sigma-$algebra $\mathcal{F}_{t-1}$ denoted by $Q_{ y_t } ( \tau | \mathcal{F}_{t-1}  )$ be
\begin{align*}
Q_{ y_t } ( \tau | \mathcal{F}_{t-1}  ) = \alpha (\tau) + \sum_{j=1}^p \phi_j (\tau)  Y_{t-j} + \sum_{j=0}^q \boldsymbol{\theta}_j^{\prime} (\tau)  \boldsymbol{X}_{t-j} 
\end{align*} 
We consider that $Y_t$ is represented as a quantile autoregressive distributed lag (QARDL) process 
\begin{align}
Q_{ y_t } ( \tau | \mathcal{F}_{t-1}  ) = \alpha (\tau) + \sum_{j=1}^p \phi_j (\tau)  Y_{t-j} + \sum_{j=0}^q \boldsymbol{\theta}_j^{\prime} (\tau)  \boldsymbol{X}_{t-j} + U_t(\tau)
\end{align} 
where $U_t(\tau) = Y_t - Q_{ y_t } ( \tau | \mathcal{F}_{t-1}  )$.  We reformulate the above expression so that it can capture the long-run relationship between $Y_t$ and $\boldsymbol{X}_t$ using the following long-run quantile process
\begin{align}
Y_t = \mu (\tau) + \boldsymbol{X}_t^{\prime} \boldsymbol{\beta} (\tau) + \mathcal{R}_t ( \tau )
\end{align}
We explain below the steps and the coefficients to obtain the above reformulation. 
\begin{itemize}

\item[\textbf{Step 1.}] We decompose the term $\textcolor{blue}{ \sum_{j=0}^q \boldsymbol{\theta}_j^{\prime} (\tau)  \boldsymbol{X}_{t-j} } \equiv \textcolor{blue}{ \boldsymbol{X}_t^{\prime} \boldsymbol{\gamma}(\tau) + \sum_{j=0}^{q-1} \boldsymbol{Z}_{t-j}^{\prime} \boldsymbol{\delta} (\uptau ) }$ which implies that
\begin{align*}
Y_t = \alpha (\tau) + \sum_{j=1}^p \phi_j (\tau)  Y_{t-j}  +  \underbrace{ \textcolor{blue}{ \boldsymbol{X}_t^{\prime} \boldsymbol{\gamma}(\tau) + \sum_{j=0}^{q-1} \boldsymbol{Z}_{t-j}^{\prime} \boldsymbol{\delta} (\uptau ) } }_{  \sum_{j=0}^q \boldsymbol{\theta}_j^{\prime} (\tau)  \boldsymbol{X}_{t-j}  }  + U_t(\tau)
\end{align*}
where $\boldsymbol{\gamma} (\tau) := \sum_{j=0}^q \boldsymbol{\theta}_j (\tau)$, $\boldsymbol{\delta} (\tau) :=  - \sum_{ i = j+1 }^q \boldsymbol{\theta}_i (\tau)$ and $\boldsymbol{Z}_t = \Delta \boldsymbol{X}_{t}$.

\newpage

\item[\textbf{Step 2.}]  Suppose that 
\begin{align*}
Y_t = \alpha (\tau) + \sum_{j=1}^p \phi_j (\tau)  Y_{t-j}  +  \underbrace{ \textcolor{blue}{ \boldsymbol{X}_t^{\prime} \boldsymbol{\gamma}(\tau) + \sum_{j=0}^{q-1} \boldsymbol{Z}_{t-j}^{\prime} \boldsymbol{\delta} (\uptau ) } }_{  \sum_{j=0}^q \boldsymbol{\theta}_j^{\prime} (\tau)  \boldsymbol{X}_{t-j}  }  + U_t(\tau) \equiv \mu (\tau) + \boldsymbol{X}_t^{\prime} \boldsymbol{\beta} (\tau) + \mathcal{R}_t ( \tau )
\end{align*}
where it holds that
\begin{align}
\boldsymbol{\beta} (\tau)  := \frac{  \boldsymbol{\gamma}(\tau)    }{ \displaystyle \left(  1 -  \sum_{j=1}^p \phi_j (\tau)  \right) }, \ \ \mu(\tau) :=  \frac{ \alpha(\tau)    }{ \displaystyle \left(  1 -  \sum_{j=1}^p \phi_j (\tau)  \right) }
\end{align}
and the remainder term can be written such that 
\begin{align*}
R_t (\tau) := \sum_{j=0}^{\infty} \boldsymbol{Z}^{\prime}_{t-j} \boldsymbol{\xi}_j (\tau) +  \sum_{j=0}^{\infty} \eta_j (\tau) U_{t-j} (\tau)
\end{align*}
where each $\boldsymbol{\xi}_j (\tau) $ corresponds to the "upper" unbounded sum of some positive sequence of coefficients $\boldsymbol{\pi}_{\ell} (\tau)$ such that $\boldsymbol{\xi}_j (\tau) \equiv \sum_{\ell = j +1}^{ \infty } \boldsymbol{\pi}_{\ell} (\tau)$ where the set of coefficients $\big\{   \boldsymbol{\pi}_{0} (\tau)      , \boldsymbol{\pi}_{1} (\tau),...   \big\}$ and $\big\{ \eta_0(\tau), \eta_1(\tau),...  \big\}$ which implies that
\begin{align*}
\sum_{j=0}^{\infty} \eta_j  (\tau) L^j &\equiv \left( 1 - \sum_{j=1}^p \phi_j ( \tau) L^j  \right)^{-1}, 
\\
\sum_{j=0}^{\infty} \boldsymbol{\xi}_j  (\tau) L^j  &\equiv  ( 1 - L )^{-1} \left(  \frac{ \displaystyle  \sum_{j=0}^q \boldsymbol{\theta}_j (\tau) L^j  }{ \displaystyle  1 - \sum_{j=1}^p \phi_j  (\tau)  L^j   }  -     \frac{ \displaystyle     \sum_{j=0}^q \boldsymbol{\theta}_j (\tau)  }{ \displaystyle  1 - \sum_{j=1}^p \phi_j  (\tau)   }   \right). 
\end{align*}
Our main interest lie in developing the estimation theory for the long-run parameter $\boldsymbol{\beta} (\tau)$ which is a function of $\boldsymbol{\gamma} (\tau)$ and $\boldsymbol{\phi}_j := \big(  \phi_1 (\tau),...,  \phi_p (\tau)  \big)^{\prime}$.   
\end{itemize}

\end{example}

In a similar spirit as in the illustrative example above in which the an invertible representation of the ARDL exists, we consider that our proposed econometric specification which corresponds to an augmented quantile predictive regression model with generated regressors, could be modified accordingly using equivalent type of representations. We leave some derivations for a future study. On the other hand, we should emphasize again that the generated covariates we consider in our setting are allows to be possibly related to different observable variables in different functional forms. Nevertheless, these GR equations can be estimated separately (i.e., estimation of the main diagonal of the risk matrix in \cite{katsouris2021optimal}), and thus any dependence between the GRs can be captured by the variance covariance matrix (this separation of the proxy risk measures of  $\mathsf{VaR}$ is also reflected in the novel two-step estimation approach proposed by \cite{katsouris2021optimal}).


\newpage

\subsection{Misspecification Analysis of $\mathsf{VaR}$-$\mathsf{CoVaR}$ risk matrix}

In order to obtain some further insights in the implications of the theoretical aspects discussed in this paper, we briefly discuss the extension of the framework proposed by \cite{katsouris2023statistical} into the econometric environment of this study.  Suppose that $\boldsymbol{y}_{it}$ is generated by the true adjacency matrix $A$ (e.g., consider the novel $\mathsf{VaR}-\Delta\mathsf{CoVaR}$ risk matrix proposed by \cite{katsouris2021optimal} as the network adjacency matrix). Related studies from the literature of network regression estimation and inference include \cite{zhu2017network}, \cite{zhu2019network} and \cite{zhu2020grouped} among others).  

Roughly speaking, the consistency result for the population counterpart of the risk matrix might not be held when the adjacency matrix $A$ is misspecified to be $A^{\star} = \left( a_{ij}^{\star} \right)$. Denote with $W^{\star}$ to be the row-normalized $A^{\star}$ and $X_{it}^{\star}$ to be the matrix of model regressors. Then, the estimator is 
\begin{align}
\hat{ \boldsymbol{\beta} }^{\star} ( \uptau ) = \underset{ \boldsymbol{\beta}^{\star} }{  \mathsf{arg \ min}  } \sum_{ i=1 }^N \sum_{ t = 1}^T \uprho_{ \uptau } \left( y_{it} - \boldsymbol{X}_{(i)t-1}^{\star \top} \boldsymbol{\beta}^{\star} ( \uptau ) \right).
\end{align}
Define with
\begin{align}
\widehat{ \boldsymbol{ D } }_0 &= \frac{1}{NT} \sum_{i=1}^N \sum_{t=1}^T \boldsymbol{X}_{(i)t-1}^{\star} \boldsymbol{X}_{(i)t-1}^{\star \top},  
\\
\widehat{ \boldsymbol{ D } }_1 ( \uptau) &= \frac{1}{NT} \sum_{i=1}^N \sum_{t=1}^T f_{it} \left(  \boldsymbol{X}_{(i)t-1}^{\star \top} \boldsymbol{\beta}^{\star} ( \uptau) \right) \boldsymbol{X}_{(i)t-1}^{\star} \boldsymbol{X}_{(i)t-1}^{\star \top}.
\end{align}

\begin{remark}
Notice that for the VaR-CoVaR risk matrix, misspecification has a different interpretation than for the usual binary or weighted adjacency matrix for a graph. The reason is that our risk matrix is constructed based on tail forecasts using a regression-based methodology that incorporates the persistence properties of the regressors (see, \cite{katsouris2021optimal, katsouris2023statistical}).
\end{remark}
The conditions associated with the misspecified coefficients are listed below:  
\begin{assumption}
\textcolor{blue}{ \textit{(Eigenvalue-Bound)} } 
Suppose that $\widehat{ \boldsymbol{ D } }_1 ( \uptau) \overset{ p }{ \to } \boldsymbol{ D }_1 (\uptau)$  as $ \mathsf{min} \left\{ N, T \right\} \to \infty$ for any $\uptau \in (0,1)$, where $\boldsymbol{ D }_1 (\uptau) \in \mathbb{R}^{ N \times N}$ is a positive definite matrix. Then, there exists positive constants, $0 < c_1 < c_2 < \infty$ such that $c_1 \leq \lambda_{ \mathsf{min} } \left(  \boldsymbol{ D }_1 (\uptau)  \right) \leq \lambda_{ \mathsf{max} } \left(  \boldsymbol{ D }_1 (\uptau) \right) \leq c_2$ for any fixed $\uptau \in (0,1)$. 
\end{assumption}

\begin{assumption}
\textcolor{blue}{ \textit{(Monotonicity)} } 
It is assumed that $\boldsymbol{X}_{(i)t-1}^{\star \top} \boldsymbol{\beta}^{\star}(\uptau)$ such that $1 \leq i \leq N$ and $1 \leq t \leq T$ is a monotone increasing function with respect to $\uptau \in (0,1)$. 
\end{assumption}

\begin{corollary}
Define with $d \left( W^{\star}, W \right) = \displaystyle \sum_{i,t} \left| w_{ij}^{\star} - w_{ij} \right|$ to be the total amount of misspecification of $W$. Then, we have that 
\begin{align}
\label{beta.asympt}
\left( \hat{\boldsymbol{\beta}}^{\star}( \uptau ) - \boldsymbol{\beta}^{\star} ( \uptau ) \right) = - \frac{1}{NT}  \left\{ \boldsymbol{D}_1 ( \uptau ) \right\}^{-1} \sum_{i=1}^N \sum_{t=1}^T \boldsymbol{X}_{(i)t-1}^{\star} \psi_{\uptau} \left( \boldsymbol{ \mathcal{E}}_{it \uptau} \right) + \mathcal{R}_{NT} \left( \uptau \right),
\end{align}
with $\mathsf{sup}_{\uptau} \norm{ R_{NT} (\uptau) } = o_p \left( \left( NT \right)^{- 1 / 2} \right)$.
\end{corollary}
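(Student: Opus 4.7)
The plan is to obtain a Bahadur-type expansion of the misspecified QR estimator $\hat{\boldsymbol{\beta}}^{\star}(\uptau)$ around the pseudo-true value $\boldsymbol{\beta}^{\star}(\uptau)$ by combining Knight's identity for the check loss $\rho_{\uptau}$ with the convexity of the criterion function. First, I would center the objective by defining
$$\mathbb{G}_{NT}(\boldsymbol{\delta}, \uptau) := \sum_{i=1}^{N}\sum_{t=1}^{T}\Big[\rho_{\uptau}\big(\mathcal{E}_{it\uptau}-(NT)^{-1/2}\boldsymbol{X}_{(i)t-1}^{\star\top}\boldsymbol{\delta}\big)-\rho_{\uptau}(\mathcal{E}_{it\uptau})\Big],$$
with $\mathcal{E}_{it\uptau}=y_{it}-\boldsymbol{X}_{(i)t-1}^{\star\top}\boldsymbol{\beta}^{\star}(\uptau)$. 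Since $\mathbb{G}_{NT}(\cdot,\uptau)$ is convex and uniquely minimized at $\hat{\boldsymbol{\delta}}=(NT)^{1/2}(\hat{\boldsymbol{\beta}}^{\star}(\uptau)-\boldsymbol{\beta}^{\star}(\uptau))$, it suffices to control the quadratic approximation of $\mathbb{G}_{NT}$ uniformly over $\boldsymbol{\delta}$ on compact sets and $\uptau\in\mathcal{T}\subset(0,1)$.

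Second, I would invoke Knight's identity $\rho_{\uptau}(u-v)-\rho_{\uptau}(u)=-v\,\psi_{\uptau}(u)+\int_{0}^{v}[\mathbf{1}\{u\leq s\}-\mathbf{1}\{u\leq 0\}]\,ds$ to decompose $\mathbb{G}_{NT}(\boldsymbol{\delta},\uptau)$ into a linear term driven by $(NT)^{-1/2}\sum_{i,t}\boldsymbol{X}_{(i)t-1}^{\star\top}\boldsymbol{\delta}\,\psi_{\uptau}(\mathcal{E}_{it\uptau})$ and a non-linear remainder. Taking conditional expectations of the non-linear piece and using the smoothness of $f_{it}(\cdot)$ from Assumption~\ref{Assumption3} together with a first-order Taylor expansion yields the deterministic part $\tfrac{1}{2}\boldsymbol{\delta}^{\top}\widehat{\boldsymbol{D}}_{1}(\uptau)\boldsymbol{\delta}$, which under the eigenvalue bound assumption converges in probability to the positive-definite matrix $\tfrac{1}{2}\boldsymbol{\delta}^{\top}\boldsymbol{D}_{1}(\uptau)\boldsymbol{\delta}$. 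Applying the convexity lemma of Pollard (and the stochastic equicontinuity arguments reviewed in the discussion of \cite{saikkonen1995problems}) shows that the difference between the non-linear term and its conditional mean is $o_{p}(1)$ uniformly in $(\boldsymbol{\delta},\uptau)$ on compacts.

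Third, minimizing the resulting quadratic approximation in $\boldsymbol{\delta}$ gives the leading term
$$\hat{\boldsymbol{\delta}}(\uptau)=-\boldsymbol{D}_{1}(\uptau)^{-1}\frac{1}{(NT)^{1/2}}\sum_{i,t}\boldsymbol{X}_{(i)t-1}^{\star}\psi_{\uptau}(\mathcal{E}_{it\uptau})+o_{p}(1),$$
which is exactly \eqref{beta.asympt} after rescaling by $(NT)^{-1/2}$. To sharpen the pointwise $o_{p}(1)$ into the uniform rate $\sup_{\uptau}\|R_{NT}(\uptau)\|=o_{p}((NT)^{-1/2})$, I would bracket the class $\{\psi_{\uptau}(\mathcal{E}_{it\uptau})\,\boldsymbol{X}_{(i)t-1}^{\star}:\uptau\in\mathcal{T}\}$ and apply a maximal inequality for empirical processes indexed by a VC-subgraph class, exploiting the monotonicity assumption on $\boldsymbol{X}_{(i)t-1}^{\star\top}\boldsymbol{\beta}^{\star}(\uptau)$ in $\uptau$ to obtain polynomial covering numbers.

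The main obstacle, in my view, is the uniform control of the remainder under misspecification of $W$. Unlike the correctly-specified case where $\mathbb{E}[\psi_{\uptau}(\mathcal{E}_{it\uptau})\mid\mathcal{F}_{t-1}]=0$, here this conditional mean is only approximately zero with an error that scales with the misspecification metric $d(W^{\star},W)$. I would therefore need to show that the bias contribution induced by $d(W^{\star},W)$ is absorbed into the definition of the pseudo-true $\boldsymbol{\beta}^{\star}(\uptau)$ (so that the implicit orthogonality condition defining $\boldsymbol{\beta}^{\star}(\uptau)$ removes the first-order bias), and that higher-order terms remain of order $o_{p}((NT)^{-1/2})$ provided the misspecification is $o((NT)^{1/2})$ in total variation. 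Combining this with the eigenvalue lower bound on $\boldsymbol{D}_{1}(\uptau)$ then delivers the claimed uniform Bahadur representation.
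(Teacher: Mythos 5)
Your approach is essentially the one the paper takes, but the paper's own proof is only a fragment: it recentres the check-loss criterion in the local parameter $\Delta^{\star}(\uptau)$, writes down the centered objective $\mathcal{Z}^{o}_{NT}(\Delta,\uptau)$ --- exactly your $\mathbb{G}_{NT}$ up to the $(NT)^{-1/2}$ scaling --- and then stops, leaving the quadratic approximation, the passage to the minimizer, and the uniform control of the remainder entirely implicit. The tools for those missing steps are parked in Appendix A (the Kato-type convexity theorem giving $x_n(\uptau)=\left\{Q(\uptau)\right\}^{-1}W_n(\uptau)+r_n(\uptau)$ with $\underset{\uptau}{\mathsf{sup}}\norm{r_n(\uptau)}=o_p(1)$, and the Chentsov--Billingsley fourth-moment bounds on $\zeta_{it}(\uptau,\dot{\uptau})$ for tightness), and your Knight-identity decomposition followed by the convexity lemma is precisely how one would deploy them; after undoing the $(NT)^{1/2}$ rescaling this delivers the claimed $o_p\left((NT)^{-1/2}\right)$ remainder. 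Two points where your proposal supplies content the paper lacks: first, you make explicit that under misspecification of $W$ the score $\psi_{\uptau}(\boldsymbol{\mathcal{E}}_{it\uptau})$ is no longer conditionally centered, so the pseudo-true $\boldsymbol{\beta}^{\star}(\uptau)$ must be defined by a population orthogonality condition that absorbs the first-order bias, with the residual bias controlled by $d(W^{\star},W)$ --- the paper asserts the representation without ever addressing this; second, you give an actual uniformity argument (bracketing and covering numbers exploiting the monotonicity assumption) for the rate of the remainder, which the paper's proof never reaches. The proposal is therefore correct and, as written, strictly more complete than the source.
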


\newpage 

Denote with 
\begin{align*}
\boldsymbol{ \mathcal{E}}_{(i)t }(\uptau) := \left( y_{(i)t} - \boldsymbol{X}_{(i)t-1}^{\star \prime} \boldsymbol{\beta}^{\star} ( \uptau ) \right) \ \ \ \text{and} \ \ \ \mathbb{S}_{in}  \big(  \boldsymbol{\beta}^{\star} ( \uptau )  \big) := \frac{1}{n} \sum_{t=1}^n \psi_{\uptau} \left( \boldsymbol{ \mathcal{E}}_{(i)t}  (\uptau) \right) \boldsymbol{X}_{(i)t-1}^{\star}. 
\end{align*}

\medskip

Moreover, assuming that the amount of misspecification has a bounded convergence rate such that $d \left( W^{\star}, W \right) = o_p \left( \sqrt{N / T} \right)$, then we have that  $\left( \hat{\boldsymbol{\beta}}^{\star}( \uptau ) - \boldsymbol{\beta}^{\star} ( \uptau ) \right) = o_p \left( \sqrt{N / T} \right)$. Consequently, it can be shown that when the misspecification amount is under control, that is, $d \left( W^{\star}, W \right) = \mathcal{O}_p \left( \sqrt{N / T} \right)$, the resulting estimator is still $\sqrt{NT}-$consistent. 
\begin{proof}
We divide the proof into two parts. In the first part of the proof, we show the representation given by expression \eqref{beta.asympt} while in the second part we show the consistency result for  $\hat{\boldsymbol{\beta}}^{\star}( \uptau )$. Denote with $\Delta^{\star} \left( \uptau \right) = \hat{\boldsymbol{\beta}}^{\star}( \uptau ) - \boldsymbol{\beta}^{\star} ( \uptau )$. Then, we have that 
\begin{align}
\label{min.rho}
\uprho_{ \uptau } \left( y_{it} - \boldsymbol{X}_{(i)t-1}^{\star \top} \hat{ \boldsymbol{\beta}}^{\star} ( \uptau ) \right) = \uprho_{ \uptau } \left(  \boldsymbol{ \mathcal{E} }_{it \uptau} -   \boldsymbol{X}_{(i)t-1}^{\star} \Delta^{\star} \left( \uptau \right) \right).
\end{align}
where $\boldsymbol{ \mathcal{E} }_{it \uptau} = y_{it} - \boldsymbol{X}_{(i)t-1}^{\star \top} \boldsymbol{\beta}^{\star} ( \uptau )$. Then, the minimization of \eqref{min.rho} is equivalent to minimizing for a fixed $\uptau \in (0,1)$ such that 
\begin{align}
\mathcal{Z}_{NT}^{o} \left( \Delta , \uptau \right) = \left( NT \right)^{-1} \sum_{i=1}^N  \sum_{t=1}^T \left\{ \uprho_{ \uptau } \left( V_{it \uptau }  - \left( NT \right)^{- 1/ 2} X^{\top}_{i(t-1)} \Delta \right) - \uprho_{ \uptau } \left( V_{it \uptau } \right) \right\}.
\end{align}
\end{proof}
A process $W_{NT} \left( \uptau \right)$ is said to be tight if and only if for any $\delta > 0$ there exists a compact set $E$ with
\begin{align}
\underset{ \uptau \in E }{ \mathsf{sup} } \ \mathbb{P} \big(  W_{NT} \left( \uptau \right) \in E \big) > 1 - \delta.
\end{align}

\begin{proof}
We define with 
\begin{align}
\psi_1 (D) = - \frac{1}{ \sqrt{NT} } \sum_{i,t} \big[ \psi_{\uptau_2} \left( V_{it \uptau_2 } \right) - \psi_{\uptau_1} \left( V_{it \uptau_1 } \right) \big]
\end{align}
for any interval $D = ( \uptau_1, \uptau_2  ]$. Therefore, to show the tightness property, we adopt Theorem 15.6 from \cite{billingsley1968convergence} and prove a sufficient Chentsov-Billingsley type of inequality (see, \cite{bickel1971convergence})  as follows. 
\end{proof}

\begin{lemma}
For any two intervals $D_1 = ( \uptau_1, \uptau_2 ]$ and $D_2 = ( \uptau_2, \uptau_3 ]$ we have that 
\begin{align}
\mathbb{E} \left[ \big\{ \eta^{\top} \xi_1 ( D_1 ) \big\}^2 \big\{ \eta^{\top} \xi_1 ( D_2 ) \big\}^2 \right] \leq C (\uptau_3 - \uptau_1 ), 
\end{align}
where $C$ is a finite positive constant (see,  Appendix of \cite{zhu2019network}).  
\end{lemma}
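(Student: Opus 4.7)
The plan is to verify the Chentsov-Billingsley type moment inequality by expanding the fourth-order functional and reducing it, through the martingale-difference property of the scores and the disjointness of $D_1$ and $D_2$, to a product of two second-moment calculations on each block separately. I would begin by writing
\[
\eta^{\top}\xi_1(D_j) \;=\; \frac{1}{\sqrt{NT}}\sum_{i,t}\eta^{\top}\Delta_{it}(D_j), \qquad \Delta_{it}(D_j) \;:=\; \psi_{\uptau_{j+1}}\!\left(V_{it\uptau_{j+1}}\right)-\psi_{\uptau_j}\!\left(V_{it\uptau_j}\right),
\]
and expanding $\mathbb{E}\big[\{\eta^{\top}\xi_1(D_1)\}^{2}\{\eta^{\top}\xi_1(D_2)\}^{2}\big]$ into a quadruple sum over indices $(i_k,t_k)_{k=1}^{4}$. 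Invoking the martingale-difference structure of the centred scores from Assumption \ref{Assumption1} and the conditional independence across the cross-section, all terms in which some index $(i,t)$ appears an odd number of times annihilate in expectation, leaving only the diagonal pairings whose cardinality is of order $(NT)^{2}$ and hence balances the $1/(NT)^{2}$ normalisation.

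Next, for each fixed $(i,t)$ I would compute the per-observation second moment $\mathbb{E}\big[\{\eta^{\top}\Delta_{it}(D_j)\}^{2}\mid\mathcal{F}_{t-1}\big]$. Because $\Delta_{it}(D_j) = (\uptau_{j+1}-\uptau_j) - \big[\mathbf{1}\{V_{it\uptau_{j+1}}<0\} - \mathbf{1}\{V_{it\uptau_j}<0\}\big]$, the indicator increment is supported on the thin event $\{V_{it}\in(\uptau_j,\uptau_{j+1}]\}$, and under the conditional density regularity in Assumption \ref{Assumption3} and Assumption \ref{Assumption4} its second moment reduces to a quantity of order $(\uptau_{j+1}-\uptau_j)$ uniformly in $(i,t)$. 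Combining the two per-block bounds via the elementary product inequality $(\uptau_2-\uptau_1)(\uptau_3-\uptau_2)\leq\tfrac{1}{4}(\uptau_3-\uptau_1)^{2}\leq\tfrac{1}{4}(\uptau_3-\uptau_1)$, valid since $\uptau_i\in(0,1)$, would then deliver the claimed bound with a constant $C$ uniform in $\eta$ on the unit sphere and in the triple $(\uptau_1,\uptau_2,\uptau_3)$.

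The main obstacle will be handling the cross-time dependence induced by the LUR parametrisation of the regressors: since $V_{it\uptau}$ inherits near-integrated dynamics from $\boldsymbol{x}_{t-1}$, the observations are not independent across $t$. To circumvent this I would condition on the filtration $\mathcal{F}_{t-1}$ so that, by the regressor-predeterminedness imposed in Assumption \ref{Assumption1}, the centred scores form a martingale-difference array with respect to $\mathcal{F}_t$; this is what keeps the quadruple sum collapsing onto its diagonal pairings despite the persistence in the covariates. A further subtle point concerns mixed-block pairings in which the same $(i,t)$ contributes both a $\Delta_{it}(D_1)$ and a $\Delta_{it}(D_2)$ factor: here one exploits that the indicator parts of $\Delta_{it}(D_1)$ and $\Delta_{it}(D_2)$ are supported on disjoint events (a single residual cannot simultaneously fall in $(\uptau_1,\uptau_2]$ and $(\uptau_2,\uptau_3]$), so that the mixed cross contribution is controlled solely by the centering constants and again produces the bilinear $(\uptau_2-\uptau_1)(\uptau_3-\uptau_2)$ rate required for the final estimate.
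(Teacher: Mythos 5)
Your proof is correct and reaches the same bound, but it is organised differently from the paper's argument. The paper's first move is Cauchy--Schwarz: it bounds $\mathbb{E}\big[\{\eta^{\top}\xi_1(D_1)\}^2\{\eta^{\top}\xi_1(D_2)\}^2\big]$ by the product of the square roots of the two \emph{pure} fourth moments $\mathbb{E}\big[(\sum_{i,t}\eta^{\top}X_{i(t-1)}\zeta_{it}(\uptau_1,\uptau_2))^4\big]$ and $\mathbb{E}\big[(\sum_{i,t}\eta^{\top}X_{i(t-1)}\zeta_{it}(\uptau_2,\uptau_3))^4\big]$, and only then expands each fourth moment into a quadruple sum, kills odd pairings via the martingale-difference property $\mathbb{E}[\zeta_{it}\mid\mathcal{F}_{t-1}]=0$, and uses $\mathbb{E}[\zeta_{it}^2]\leq\Delta\uptau$ and $\mathbb{E}[\zeta_{it}^4]\leq\Delta\uptau$ together with bounded moments of $\eta^{\top}X_{i(t-1)}$. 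You instead expand the mixed moment directly, which forces you to control the cross pairings where the same $(i,t)$ contributes a factor from each block; your observation that the indicator increments over $(\uptau_1,\uptau_2]$ and $(\uptau_2,\uptau_3]$ are supported on disjoint events is exactly what makes those terms behave, and it yields the bilinear rate $(\uptau_2-\uptau_1)(\uptau_3-\uptau_2)\leq\tfrac14(\uptau_3-\uptau_1)$ without any Cauchy--Schwarz loss. What the paper's route buys is bookkeeping simplicity (no mixed pairings to classify, only the two non-vanishing configurations (a) and (b) per block); what your route buys is a slightly sharper leading constant and a self-contained treatment of the cross terms. Both arguments lean on the same two unstated conveniences --- conditional uncorrelatedness across the cross-section to restrict to the diagonal pairings, and boundedness of $\mathbb{E}[(\eta^{\top}X_{i(t-1)})^2]$ and $\mathbb{E}[(\eta^{\top}X_{i(t-1)})^4]$, which is delicate under the LUR parametrisation where second moments of $X_{i(t-1)}$ grow with $t$ --- so neither of these is a gap in your proposal relative to the paper, but both would need to be made explicit for a fully rigorous statement.
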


\newpage

Moreover, we denote with $\zeta_{it} \left( \uptau,  \dot{ \uptau} \right) :=  \psi_{ \uptau } \left( V_{it \uptau } \right) -  \psi_{ \dot{ \uptau} } \left( V_{it \dot{ \uptau} } \right) $ for two quantile levels, $\uptau$ and $\dot{ \uptau}$. Then, since $\mathbb{E} \left[ \zeta_{it} \left( \uptau,  \dot{ \uptau} \right) | \mathcal{F}_{t-1} \right] = 0$, using the Cauchy's inequality we can derive an appropriate bound
\begin{align*}
&\mathbb{E} \left[ \left( \sum_{i,t} \eta^{\top}  X_{i(t-1)} \zeta_{it} \left( \uptau_1, \uptau_2 \right)\right)^2 \left( \sum_{i,t} \eta^{\top}  X_{i(t-1)} \zeta_{it} \left( \uptau_2, \uptau_3 \right)\right)^2 \right] \leq 
\\
&\ \ \ \ \ \ \sqrt{  \mathbb{E} \left[ \left( \sum_{i,t} \eta^{\top}  X_{i(t-1)} \zeta_{it} \left( \uptau_1, \uptau_2 \right)\right)^4 \right] } \sqrt{ \mathbb{E} \left[ \left( \sum_{i,t} \eta^{\top}  X_{i(t-1)} \zeta_{it} \left( \uptau_3, \uptau_4 \right)\right)^4 \right] }. 
\end{align*} 
Therefore, it holds that 
\begin{align*}
\mathbb{E} \bigg[ \bigg( \eta^{\top}  X_{i_1( t_1-1 )} \zeta_{i_1 t_1 } \left( \uptau , \dot{ \uptau} \right) \bigg) \bigg( \eta^{\top}  X_{i_2( t_2-1 )} \zeta_{i_2 t_2 } \left( \uptau , \dot{ \uptau} \right) \bigg)  \bigg( \eta^{\top}  X_{i_3( t_3-1 )} \zeta_{i_3 t_3 } \left( \uptau , \dot{ \uptau} \right) \bigg)  \bigg( \eta^{\top}  X_{i_4( t_4-1 )} \zeta_{i_4 t_4 } \left( \uptau , \dot{ \uptau} \right) \bigg) \bigg]
\end{align*}
is non-zero only if (a) $i_1 = i_2, t_1 = t_2$ and $i_3 = i_4 \neq i_1, t_3 = t_4 \neq t_1$ or (b) $i_1 = i_2 = i_3 = i_4$ and $t_1 = t_2 = t_3 = t_4$.  Then, it can be shown that 
\begin{align*}
&\left( NT \right)^{-2} \mathbb{E} \left[ \left( \sum_{i,t} \eta^{\top}  X_{i(t-1)} \zeta_{it} \left( \uptau_1, \uptau_2 \right)\right)^4  \right] 
=
\nonumber
\\
&\ \ \ \ \ \left( NT \right)^{-2} \left\{ \sum_{i,t} \mathbb{E} \left[ \left(  \eta^{\top}  X_{i(t-1)} \right)^2 \zeta_{it}^2 \left( \uptau_1, \uptau_2 \right)  \right] \right\}^2
+ 
\left( NT \right)^{-2} \left\{ \sum_{i,t} \mathbb{E} \left[ \left(  \eta^{\top}  X_{i(t-1)} \right)^4 \zeta_{it}^4 \left( \uptau_1, \uptau_2 \right) \right] \right\}.
\end{align*}
Notice that we have that $\mathbb{E} \left( \eta^{\top} X_{it}  \right)^2 = \mathcal{O}_p(1)$ and  $\mathbb{E} \left( \eta^{\top} X_{it}  \right)^4 = \mathcal{O}_p(1)$.  Moreover, it can be easily verified that $\mathbb{E} \left[  \zeta_{it}^2 \left( \uptau_1, \uptau_2 \right) \right] \leq \uptau_1 - \uptau_2$ and $\mathbb{E} \left[  \zeta_{it}^4 \left( \uptau_1, \uptau_2 \right) \right] \leq \uptau_1 - \uptau_2$. By combining the aforementioned results together we obtain 
\begin{align}
\mathbb{E} \left[ \big\{ \eta^{\top} \xi_1 ( D_1 ) \big\}^2 \big\{ \eta^{\top} \xi_1 ( D_2 ) \big\}^2 \right] \leq C (\uptau_2 - \uptau_1 )(\uptau_3 - \uptau_2 ) \leq C \left| \uptau_3 - \uptau_1  \right|, 
\end{align}
for some positive constant $C$.  We can then conclude that the stochastic quantity $\xi_1 ( \uptau)$ converge weakly to a $( q + 1 )-$dimensional Brownian Bridge. 


\newpage 

\begin{wrap}

\section{Monte Carlo Simulations}

\subsection{Empirical Size}

\begin{itemize}

\item As can be seen, all of the tests are rather well sized for large sample sizes, and the distortions are not very severe for the smallest size either. It can be concluded that the asymptotic null distributions of the test statistics are reasonably good approximations to the unknown finite-sample distributions for $n \geq 1000$. 

\item To complete the experiment, we also investigate the effect of having more persistence in the data-generating process.  

\item To explore the effects of misspecification of the error distribution on the finite-sample properties of our test statistics, we repeated the simulations with different error distributions. 

\item Moreover, notice that unconditional heterogeneity in the innovation sequences can manifest when considering test statistics over a compact set $\uptau \in \Lambda$. 

\item  Another major concern is a comparison between the asymptotic variances of the QR-IVX estimators and the GQR-IVX which accounts for the generated regressor and thus the estimation error from the first stage estimation procedure under time series nonstationarity. 

\end{itemize}

\subsection{Power Simulations}

\begin{itemize}

\item To evaluate the performance of our test statistics, we perform a power comparison to a benchmark test. Moreover, we employ our proposed subvector test statistic that corresponds to the coefficient of the generated regressor. Under the alternative we consider local power functions that capture deviations from the zero coefficient of the systemic risk under the null as a function of the sample size. Our sample sizes are $n \in \left\{ 250, 500 \right\}$ and we use $B = 1,000$ replications. 

\item  Notice that in the case of a subvector testing, by the usual invariance argument of Wald (1943), the joint Wald statistic for testing the null hypothesis of interest, is equal to the sum of the Wald statistic for testing $\delta = 0$ and the statistic for testing $\beta = 0$. Furthermore, alternative combinations of the two statistics that place different weights on each of the two components can  be considered in order to direct power to specific alternatives. 

\item The power performance of our test statistic is also uniformly better than other competing methods under different scenarios when the predictive regressor is highly persistent. A well-known stylized feature in the literature is that highly persistent predictors exhibit strong conditional heteroscedasticity  and endogeneity.

\end{itemize}

\end{wrap}

\newpage 

\section{Conclusion}
\label{Section5}

We study estimation and inference for nonstationary quantile predictive regression models when a generated covariate is required for the purposed of estimating pairs of risk measures under time series nontationarity. In We discuss the main estimation steps of the proposed modeling methodology and provide an  asymptotic theory analysis for the associated quantile-dependent model estimators under regressors nonstationarity which is captured via the LUR parametrization.   

We propose a two-stage estimation procedure, where the generated regressor that corresponds to the one-step ahead prediction of $\mathsf{VaR}$ based on persistent data is generated in the first stage of the estimation procedure. In the second-stage of the estimation procedure the two-stage estimator is employed to construct the one-step ahead prediction of $\mathsf{CoVaR}$ based on the estimated $\mathsf{VaR}$ and another set of persistent data not necessarily identical to the first-stage estimation procedure. We investigate the asymptotic properties of the two-stage estimator based on quantile predictive regression models. Moreover, we show that accounting for the presence of generated regressors when obtaining forecasts of the pair $( \mathsf{VaR}, \mathsf{CoVaR})$, based on the proposed two-stage estimation procedure, results in improved statistical  testing after adjusting the variance-covariance matrix to account for the estimation error in the first-stage estimation procedure. 

In a further study we consider  the econometric identification and estimation of a novel SUR representation that corresponds to the system-specific quantile predictive regressions as proposed by \cite{katsouris2021optimal, katsouris2023quantile, katsouris2023statistical}. Specifically, modelling tail interdependencies in a network using the SUR representation provides efficiency gains and permits to implement relevant statistical inference methods especially when regressors are assumed to be generated from a LUR process.

\paragraph{Conflicts of interest}

The author declares that there are no known conflicts of interest.

\paragraph{Data availability}

No data was used for the research described in this article.

\paragraph{Acknowledgements}

This article is based on the first chapter of my doctoral thesis which was submitted to the University of Southampton (see, \cite{katsouris2021optimal}, \cite{katsouris2023statistical}, \cite{katsouris2023quantile}). I wish to thank Professor Dr. Jose Olmo,  Professor Dr. Tassos Magdalinos and Dr. Jayeeta Bhattacharya from the Department of Economics, University of Southampton for helpful conversations and comments that significantly improve this paper as well as  Professor Dr. Markku Lanne and  Professor Dr. Mika Meitz from the Faculty of Social Sciences, University of Helsinki. I also sincerely thank Dr. Ji Hyung Lee for generously sharing the replication code for the IVX-QR procedure which has been adapted in this article. Financial support from the Research Council of Finland (grant 347986) is gratefully acknowledged.  All remaining errors are my own responsibility.

\newpage

\section{Appendix}

\subsection{Appendix A: Auxiliary Results}

\begin{theorem} (see Theorem 2 in \cite{kato2009asymptotics})
Suppose $f_n ( x , \uptau )$ with $n \geq 1$ are convex in $x$ for each $\uptau$ and bounded in $\uptau$ for each $x$. Moreover, let $g_n( x, \uptau ) = - x^{\top} W_n ( \uptau ) + \frac{1}{2} x^{\top} Q( \uptau ) x$, where $\left\{ W_n(.) \right\}$ is a sequence of bounded stochastic processes and $Q( \uptau )$ is a $d \times d$ non-stochastic symmetric positive definite matrix for each $\uptau$. Suppose that the maximum eigenvalue of $Q( \uptau )$ is bounded from above and the minimum eigenvalue of $Q( \uptau )$ is bounded away from 0 over $\uptau \in \mathcal{B}$. If it holds that 
\begin{align}
\underset{ \uptau \in \mathcal{B} }{ \mathsf{sup} } \left| f_n( x, \uptau ) -  g_n( x, \uptau ) \right| \overset{ p }{ \to } 0, 
\end{align}
for each $x$ and of for every $\eta > 0$, there exists a constant $M > 0$ such that 
\begin{align}
\underset{ n \to \infty }{ \mathsf{lim \ sup} } \ \mathbb{P} \left(  \underset{ \uptau \in \mathcal{B} }{ \mathsf{sup} } \norm{ W_n ( \uptau ) }  > M \right) \leq \eta, 
\end{align}
then $x_n ( \uptau ) = \left\{ Q ( \uptau ) \right\}^{-1} W_n ( \uptau ) + r_n ( \uptau )$, where $\mathsf{sup}_{ \uptau \in \mathcal{B} } \norm{ r_n( \uptau ) } = o_p(1)$. 
\end{theorem}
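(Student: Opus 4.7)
The plan is to deploy the classical convexity-lemma argument of Pollard (1991) and Hjort--Pollard, lifted to a uniform-in-$\uptau$ version in the spirit of \cite{kato2009asymptotics}. Here $x_n(\uptau)$ denotes the minimizer of $f_n(\cdot, \uptau)$, which exists and is eventually unique because $f_n(\cdot, \uptau)$ is convex while the candidate quadratic limit $g_n(\cdot, \uptau)$ is strictly convex via $Q(\uptau)$. The first concrete step is to identify the minimizer of $g_n(\cdot, \uptau)$ directly from the first-order condition $Q(\uptau) x = W_n(\uptau)$, yielding $x_n^{*}(\uptau) := \{Q(\uptau)\}^{-1} W_n(\uptau)$. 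The eigenvalue bound $\lambda_{\min}(Q(\uptau)) \geq c_1 > 0$ together with the tightness hypothesis on $\sup_{\uptau \in \mathcal{B}} \norm{W_n(\uptau)}$ immediately gives that $\sup_{\uptau \in \mathcal{B}} \norm{x_n^{*}(\uptau)} = \mathcal{O}_p(1)$.

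The second step exploits strict convexity of $g_n$ to produce the uniform quadratic lower bound
\[
g_n(x, \uptau) - g_n(x_n^{*}(\uptau), \uptau) \;\geq\; \frac{c_1}{2}\,\norm{x - x_n^{*}(\uptau)}^2,
\]
valid for every $x$ and every $\uptau \in \mathcal{B}$. Fix an arbitrary $\delta > 0$. Restricting to the sphere $\partial B(x_n^{*}(\uptau), \delta)$ and combining this lower bound with the hypothesized convergence $\sup_{\uptau} |f_n(x, \uptau) - g_n(x, \uptau)| \overset{p}{\to} 0$, evaluated at a sufficiently fine but finite grid of $x$ values, I would obtain, with probability tending to one,
\[
\inf_{\norm{x - x_n^{*}(\uptau)} = \delta} f_n(x, \uptau) \;>\; f_n(x_n^{*}(\uptau), \uptau) + \frac{c_1 \delta^{2}}{4},
\]
uniformly in $\uptau \in \mathcal{B}$. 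Convexity of $f_n(\cdot, \uptau)$ then forces its minimizer $x_n(\uptau)$ to lie strictly inside the ball $B(x_n^{*}(\uptau), \delta)$: otherwise a convex combination through $x_n(\uptau)$ and $x_n^{*}(\uptau)$ would contradict the displayed sphere inequality. Since $\delta$ was arbitrary, this yields $\sup_{\uptau \in \mathcal{B}} \norm{x_n(\uptau) - x_n^{*}(\uptau)} = o_p(1)$, giving the claimed representation with $r_n(\uptau) := x_n(\uptau) - x_n^{*}(\uptau)$.

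The main obstacle is the passage from the hypothesis (which is pointwise in $x$ and uniform in $\uptau$) to the uniform-in-$x$ control over the random neighbourhoods $B(x_n^{*}(\uptau), \delta)$ that the sphere argument requires. The key leverage is that $f_n - g_n$ is a difference of convex functions, together with the classical fact from convex analysis that pointwise convergence of a sequence of convex functions on a dense subset of an open convex set implies uniform convergence on every compact subset. Combined with the tightness of $x_n^{*}(\uptau)$, so that with probability at least $1 - \eta$ every sphere $\partial B(x_n^{*}(\uptau), \delta)$ is contained in a common deterministic compact ball $\bar{B}(0, M' + \delta)$, evaluating the pointwise hypothesis on a finite $\epsilon$-net of this common ball supplies the required uniform control. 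Once this convexity-to-uniformity transfer is executed, the quadratic lower bound and the convexity closing argument are mechanical, so essentially the whole substance of the proof is concentrated in this single step.
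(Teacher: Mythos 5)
Your proposal is correct and follows essentially the same route as the paper's proof: identify $y_n(\uptau)=\{Q(\uptau)\}^{-1}W_n(\uptau)$ as the minimizer of the quadratic, derive the uniform quadratic lower bound $g_n(x,\uptau)-g_n(y_n(\uptau),\uptau)\geq c\norm{x-y_n(\uptau)}^2/2$, use tightness of $\sup_{\uptau}\norm{W_n(\uptau)}$ to confine the relevant neighbourhoods to a common deterministic compact ball $K=\{x:\norm{x}\leq M+\delta\}$, upgrade the pointwise-in-$x$ hypothesis to uniform convergence on $K$ via the convexity lemma (the paper's ``Lemma 1'', your finite-net plus convexity transfer), and close with the standard Pollard-type argument that convexity of $f_n$ traps its minimizer inside the $\delta$-ball around $y_n(\uptau)$. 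The only cosmetic difference is that the paper packages the sphere argument as the one-line bound $\mathbb{P}^{*}(\sup_{\uptau}\norm{r_n(\uptau)}>\delta)\leq\mathbb{P}^{*}(\Delta_n\geq c\delta^2/2)$ rather than spelling out the convex-combination contradiction.
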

\begin{proof}
Let $y_n( \uptau ) = \left\{ Q ( \uptau ) \right\}^{-1} W_n ( \uptau )$, which is the unique minimum point of $g_n( . , \uptau)$ for each $\uptau$. Then, it can be shown that 
\begin{align}
g_n ( x, \uptau ) - g_n \left( y_n( \uptau ), \uptau \right) 
= 
\frac{1}{2} \left( x - y_n ( \uptau ) \right)^{\top} Q_n( \uptau )  \left( x - y_n ( \uptau ) \right)
\geq c \norm{ x -  y_n ( \uptau )  },
\end{align}
for some constant $c > 0$. Denote with 
\begin{align}
\Delta_n = \underset{ \uptau \in \mathcal{B} }{ \mathsf{sup} } \  \underset{ x : \norm{ x - y_n( \uptau ) } \leq \delta  }{ \mathsf{sup} }  \left| f_n( x, \uptau ) - g_n ( x , \uptau) \right|.
\end{align}
Therefore, letting $r_n ( \uptau ) = x_n ( \uptau ) - y_n ( \uptau )$, we have that $\mathbb{P}^{*} \left( \underset{ \uptau \in \mathcal{B} }{ \mathsf{sup} } \norm{ r_n ( \uptau ) } > \delta \right) \leq \mathbb{P}^{*} \left( \Delta_n \geq ( c \delta^2 ) / 2 \right)$. So it suffices to show that $\Delta_n \overset{ p }{ \to } 0$. Let $\eta > 0$ be an arbitrary positive constant. Take $M > 0$ such that the following bound holds
\begin{align}
\underset{ n \to \infty }{ \mathsf{ lim \ sup } } \ \mathbb{P}^{*} \left( \underset{ \uptau \in \mathcal{B} }{ \mathsf{sup} } \ \norm{ y_n( \uptau ) }  > M \right) \leq \eta.
\end{align}
Define $K := \left\{ x : \norm{ x - y } \leq \delta , \norm{y} \leq M \right\} = \left\{ x : \norm{x} \leq \delta + M \right\}$. Then, for every $\epsilon > 0$, 
\begin{align}
\mathbb{P}^{*} \left( \Delta_n > \epsilon \right) \leq \mathbb{P}^{*} \left( \underset{ \uptau \in \mathcal{B} }{ \mathsf{sup} } \    \underset{ x \in K }{ \mathsf{sup} } \ \big| f_n ( x, \uptau ) -   g_n ( x, \uptau ) \big| > \epsilon \right) + \mathbb{P}^{*} \left(  \underset{ \uptau \in \mathcal{B} }{ \mathsf{sup} } \norm{ y_n( \uptau ) } > M \right). 
\end{align}

\newpage

Since by Lemma 1, we have that 
\begin{align}
\underset{ \uptau \in \mathcal{B} }{ \mathsf{sup} } \    \underset{ x \in K }{ \mathsf{sup} } \ \big| f_n ( x, \uptau ) -   g_n ( x, \uptau ) \big| = \underset{ \uptau \in \mathcal{B} }{ \mathsf{sup} } \    \underset{ x \in K }{ \mathsf{sup} } \ \big| f_n ( x, \uptau ) + x^{\top} W_n( \uptau ) - \frac{1}{2} x^{\top} Q( \uptau ) x    \big| \overset{ p }{ \to } 0,
\end{align}
We can conclude that $\underset{ n \to \infty }{ \mathsf{lim \ sup} } \mathbb{P}^{*} \left( \Delta_n > \epsilon \right) \leq \eta$. Since $\eta > 0$ is arbitrary, then the proof ends. 
\end{proof}
The theoretical results presented by \cite{kato2009asymptotics} are useful when considering equivalent maximization functionals which can be suitably modified when establishing the asymptotic theory of sample moments in the case of nonstationary quantile predictive regressions (e.g., see, \cite{cai2023new}). 

Denote the conditional CDF of $Y$ given $X = x$ by $F_{Y|X} (\cdot | x)$ and its conditional quantile at $\uptau \in (0,1)$ by $Q ( \uptau | x )$, such that $Q ( \uptau | x ) = F_{Y|X}^{-1} ( \uptau | x ) = \mathsf{inf} \big\{ s : F_{Y|X} ( s | x ) \geq \uptau  \big\}$. Suppose that the quantile $Q( \uptau | x )$ is modelled as a general nonlinear function of $x$ and $\uptau$. We fix $x$ and treat $Q( \uptau | x )$ as a stochastic process in $\uptau$, where $\uptau \in \mathcal{T} = [ \lambda_1, \lambda_2 ]$ with $0 < \lambda_1 \leq \lambda_2 < 1$. 

Furthermore, it can be shown that $\hat{\beta} ( \uptau ) - \beta ( \uptau ) = O_p( n^{- 1 / 2} )$ uniformly in $\uptau \in [ \alpha, 1 - \alpha ]$ for any $\alpha \in (0, 1/ 2 )$. Then, the computational property of regression quantile processes is employed to derive the uniform asymptotic representation of $n^{1 / 2} \left( \hat{\beta} ( \uptau ) - \beta ( \uptau ) \right)$. Notice also that the functional class is P-Donsker for any $B \subset \mathbb{R}^p$
\begin{align}
\big\{ g(y,x) = \big( \uptau - I ( y \leq x^{\top} \beta ) \big) x_j , \beta \in B, \uptau \in [ \alpha, 1 - \alpha ], 1 \leq j \leq p  \big\} 
\end{align}
In other words,  a key quantity for establishing asymptotic theory and inferential methods is the regression-quantile process  such that $\boldsymbol{M}_n ( \uptau ) \equiv  \left\{  \sqrt{n} \left(  \widehat{\boldsymbol{\beta}}_n( \uptau ) - \boldsymbol{\beta} ( \uptau )  \right) \right\}$ for some $\uptau \in (0,1)$ which is regarded as a stochastic process in the space $\big( D[0,1] \big)^d$ of $\mathbb{R}^d-$valued right-continuous functions with left-hand limits on $[0,1]$ (see, \cite{goh2009nonstandard}, \cite{kato2009asymptotics} and \cite{portnoy2012nearly}\footnote{To establish the limiting behaviour of such stochastic processes several studies in the literature employ the Bahadur representation which holds uniformly in $\theta$ and $\uptau \in (0,1)$. Moreover, estimating the parameter $\theta$ induces some important changes compared to a known $\theta_0$ and required finding an expansion for $\sqrt{n} \left( \widehat{\boldsymbol{\beta}} ( \uptau ; \widehat{\theta} ) - \boldsymbol{\beta} ( \uptau ; \widehat{\theta} ) \right)$.}). 

\medskip

\begin{definition}
(Weak Uniform Convergence on $\Theta$)[see, \cite[p.~26]{white1996estimation}]
Let $( \Omega, \mathcal{F}, \mathcal{P} )$ be a probability space. Then, $Q_n ( \cdot, \theta ) - \bar{Q}_n (  \theta ) \to 0$ as $n \to \infty$ in probability law uniformly on $\Theta$ if for any $\epsilon > 0$ and any $\delta > 0$ there exists an integer $N ( \epsilon, \delta ) < \infty$ such that for all $n > N ( \epsilon, \delta )$ then
\begin{align}
\mathbb{P} \left[ \underset{ \Theta }{ \mathsf{sup} } \left|     Q_n ( \cdot, \theta ) - \bar{Q}_n (  \theta ) \right| < \epsilon \right] > 1 - \delta. 
\end{align} 
\end{definition}

\begin{remark}
If $Q_n$ and $\bar{Q}_n$ are vector or matrix valued functions then we use identical terminology to mean that uniform convergence holds element by element. Moreover, the assumption of compactness of $\Theta$ is crucial since it ensures that the supremum of $\Theta$ exists and that the functions $\underset{ \Theta }{ \mathsf{sup} } \left| Q_n (\cdot, \theta ) - \bar{Q}_n (  \theta ) \right|$ are measurable (see,  \cite{saikkonen1995problems}).

\newpage

In the remaining of this section, we borrow some relevant results from the framework proposed by \cite{bhattacharya2020quantile} (see Appendix of the paper). Within our framework the auxiliary results briefly discussed above can be useful to derive error bounds to establish the order of convergence of remainder terms when implementing a Bahadur representation to quantile-based functionals. Specifically, \cite{bhattacharya2020quantile} considers the  asymptotic behaviour of the following functional 
\begin{align}
\widehat{ \boldsymbol{\mathcal{E} } } \left( \uptau ; \theta \right) 
= 
\sqrt{n} \left( \widehat{\boldsymbol{\beta} } \left( \uptau ; \theta \right) - \boldsymbol{\beta} \left( \uptau ; \theta \right) \right)  - \bigg( \boldsymbol{H}^{-1} \left( \uptau ; \theta \right) \widehat{\boldsymbol{S}} \left( \uptau ; \theta \right)  \bigg). 
\end{align}
A uniform order for $\widehat{\boldsymbol{\mathcal{E} } } \left( \uptau ; \theta \right)$ relies on a uniform order study for the remainder term which crucially depends on the properties of the quantile regression of interest. To do this, one can employ concepts of maximal inequality under bracketing conditions as given in the related literature as well as other linearization techniques (e.g., see \cite{goh2009nonstandard}, \cite{kato2009asymptotics} and \cite{brian2021covering}). 
\end{remark}
In particular, it holds that for any $\uptau \in (0,1)$
\begin{align}
\sqrt{n} \left( \widehat{\beta}(\uptau) - \beta(\uptau)    \right) \overset{ d }{ \to } \mathcal{N} \bigg( 0, V(\uptau) \bigg).
\end{align}
Following the derivations in Proposition 1 of  \cite{bhattacharya2020quantile},  it yields that 
\begin{align*}
\sqrt{n} \left( \widehat{\beta} \big( \uptau ; \widehat{\theta} \big) - \beta(\uptau; \theta_0 ) \right)
&=
\sqrt{n} \bigg\{ \widehat{\beta}(\uptau ; \widehat{\theta} ) - \beta(\uptau; \widehat{\theta} ) \bigg\}
+ 
\sqrt{n} \bigg\{ \beta(\uptau ; \widehat{\theta} ) - \beta(\uptau; \theta_0 ) \bigg\}
\\
&= 
\sqrt{n} \bigg\{ \widehat{\beta}(\uptau ; \widehat{\theta} ) - \beta(\uptau; \widehat{\theta} ) \bigg\}
+ 
\left( \frac{ \partial \beta ( \uptau ; \theta_0 ) }{ \partial \theta } + \mathcal{O}_p(1)  \right) \sqrt{n} \left( \widehat{\theta} - \theta_0 \right)
\\
&=
\sqrt{n} \bigg\{ \widehat{\beta}(\uptau ; \widehat{\theta} ) - \beta(\uptau; \widehat{\theta} ) \bigg\} 
\left( \frac{ \partial \beta ( \uptau ; \theta_0 ) }{ \partial \theta } \right)^{\prime} \frac{1}{\sqrt{n}} \sum_{i=1}^n \Psi (Z_i) + \mathcal{O}_p(1) 
\end{align*}
Therefore, it can be proved that 
\begin{align}
\sqrt{n} \left( \widehat{\beta} \big( \uptau ; \widehat{\theta} \big) - \beta(\uptau; \theta_0 ) \right)
= 
H^{-1} \left( \uptau; \widehat{\theta} \right) \widehat{S} \left( \uptau; \widehat{\theta} \right) 
+ 
\left( \frac{ \partial \beta ( \uptau ; \theta_0 ) }{ \partial \theta } \right)^{\prime} \frac{1}{\sqrt{n}} \sum_{i=1}^n \Psi (Z_i) + \mathcal{O}_p(1). 
\end{align}

\begin{lemma}[\cite{bhattacharya2020quantile}]
Suppose that regularity conditions hold. Then, the following inequalities and unique representations can be derived. 

\begin{itemize}

\item[\textit{\textbf{(i)}}] $Q^{(2)} ( \beta ; \uptau, \theta )$ is continuous with respect to its three arguments, with 
\begin{align}
\norm{ Q^{(2)} \big( \beta_1 ; \uptau, \theta \big) - Q^{(2)} \big( \beta_0 ; \uptau, \theta \big) } \leq C \norm{ \beta_1 -  \beta_2 }
\end{align}
for all $\beta_0$ and $\beta_1, \theta \in \Theta$ and $\uptau \in [0,1]$. 

\item[\textit{\textbf{(ii)}}] $Q^{(2)} ( \beta ; \uptau, \theta )$ is strictly positive for all $\beta \in B(\theta), \theta \in B(\theta)$ and $\uptau \in [0,1]$.

\newpage

\item[\textit{\textbf{(iii)}}] For $\theta \in \Theta$ and $\uptau \in [0,1]$ has a unique minimizer $\beta ( \uptau ; \theta )$ which is continuously diffable in $\theta$ 
\begin{align}
\frac{ \partial \beta ( \uptau ; \theta ) }{ \partial \theta^{\prime} } 
&= 
H ( \uptau ; \theta )^{-1} D ( \uptau ; \theta ),
\\
\frac{ \partial \beta ( \uptau ; \theta ) }{ \partial \theta^{\prime} } 
&= 
H ( \uptau ; \theta )^{-1} \mathbb{E} \left[ X (\theta)  \right]
\end{align}
\end{itemize}
where $H ( \uptau ; \theta )$ and $D ( \uptau ; \theta )$ are defined above.  
\end{lemma}

Consider a uniform order study of the Bahadur error term such that 
\begin{align}
\widehat{ \boldsymbol{\mathcal{E} } } \left( \uptau ; \theta \right) 
&= 
\sqrt{n} \left( \widehat{\boldsymbol{\beta} } \left( \uptau ; \theta \right) - \boldsymbol{\beta} \left( \uptau ; \theta \right) \right)  - \bigg( \boldsymbol{H}^{-1} \left( \uptau ; \theta \right) \widehat{\boldsymbol{S}} \left( \uptau ; \theta \right)  \bigg) 
\\
\underset{ (\uptau, \theta ) \in [ \uptau_1, \uptau_2  ] \times \Theta }{ \mathsf{sup} }  \norm{ \widehat{\boldsymbol{\mathcal{E} } } \left( \uptau ; \theta \right)  }  &= \mathcal{O}_{p} \left( \frac{ \mathsf{log}^{3 / 4} n }{ n^{1/4} } \right).
\end{align}
Define the following optimization function
\begin{align}
\mathcal{L}_n \left( \gamma, \uptau ; \theta \right) = \sum_{t=1}^n \left\{ \rho_{\uptau} \left( Y_t (\theta) - X_t(\theta)^{\top} \left(  \frac{\gamma}{\sqrt{n} } + \boldsymbol{\beta}\left( \uptau ; \theta \right) \right)     \right) - \rho_{\uptau} \bigg( Y_t (\theta) - X_t( \theta)^{\top} \boldsymbol{\beta} \left( \uptau; \theta \right) \bigg) \right\},
\end{align}
This implies that a unique minimizer can be established given by the expression below  
\begin{align}
\sqrt{n} \left( \widehat{ \boldsymbol{\beta} } \left( \uptau; \theta \right) - \boldsymbol{\beta} \left( \uptau; \theta \right) \right) = \underset{ \gamma }{ \mathsf{arg \ min } }  \ \mathcal{L}_n \left( \gamma, \uptau ; \theta \right).
\end{align}
Relevant studies with the topological convergence as well as existence and uniqueness proofs are mentioned in the Appendix of \cite{katsouris2022asymptotic} (see, also  \cite{kato2009asymptotics}). Thus, following the argumentation in \cite{bhattacharya2020quantile} a uniform order for $\widehat{\boldsymbol{\mathcal{E} } } \left( \uptau ; \theta \right)$ relies on a uniform order study for the remainder term $\mathbb{R} \big( \gamma, \epsilon, \uptau ; \theta   \big)$. To do this, one can employ concepts of maximal inequality under bracketing conditions as given in the related literature as well as other linearisation techniques. There are two techniques commonly used in the literature; the bracketing and set covering approach as presented by \cite{bhattacharya2020quantile} and the partitioning approach presented in several studies (see, \cite{brian2021covering} and \cite{goh2009nonstandard}).  The remainder term can be expressed as 
\begin{align}
\mathbb{R}_n \left( \gamma, \epsilon, \uptau ; \theta \right) = 
\mathbb{L}_n \left( \gamma, \epsilon, \uptau ; \theta \right) -
\mathbb{L}_n^0 \left( \gamma, \epsilon, \uptau ; \theta \right) 
= 
\sum_{t=1}^n  \mathbb{R}_t \left( \gamma, \epsilon, \uptau ; \theta \right)
\end{align}

\begin{lemma}[\cite{bhattacharya2020quantile}]
For real numbers $t_{\gamma}$ and $t_{\epsilon} > 0$ with $t_{\epsilon} \geq 1$, $t_{\epsilon} = \big( t \ \mathsf{log}^{3 / 4} n \big) / n^{1 / 4}$ for some $t > 0$, such that $\left( t_{\gamma} + t_{\epsilon} \right)^{1 / 2} / t_{\epsilon} \leq \mathcal{O} \left( n^{1 / 4} / \mathsf{log}^{1 / 2} n  \right)$, for large $n$,  
\begin{align}
\mathbb{E} \left[  \underset{ \left( \gamma, \epsilon, \uptau ; \theta \right) \in \mathcal{B} \left( 0, t_{\gamma} \right)    \times \mathcal{B} \left( 0, t_{\epsilon} \right) \times [ \uptau_1, \uptau_2 ] \times \Theta }{ \mathsf{sup} } \ \bigg| \mathbb{R}_n^1 \big( \gamma, \epsilon, \uptau ; \theta \big)   \bigg| \right] 
\leq C \frac{ \mathsf{log}^{1 / 2} n }{ n^{1/4} } t_{\epsilon} ( t_{\gamma} + t_{\epsilon} )^{1 / 2}.
\end{align}
\end{lemma}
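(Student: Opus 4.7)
The remainder $\mathbb{R}_n^1(\gamma,\epsilon,\uptau;\theta)$ is, after invoking Knight's identity, a centered sum of indicator-type fluctuations whose argument depends linearly on $(\gamma/\sqrt n+\epsilon)$ through the regressor $X_t(\theta)$. The plan is to view $\mathbb{R}_n^1$ as an empirical process indexed by the product set $\mathcal{B}(0,t_\gamma)\times\mathcal{B}(0,t_\epsilon)\times[\uptau_1,\uptau_2]\times\Theta$, establish an $L^2$ variance bound together with a bracketing-entropy bound for the associated function class, and finally apply a Pollard--Talagrand-type maximal inequality to control the expected supremum.

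I would proceed in three steps. First, by Knight's identity each summand can be written in the form
\begin{equation*}
\mathbb{R}_t \;=\; \int_0^{X_t(\theta)^{\top}(\gamma/\sqrt n+\epsilon)} \bigl[\mathbf{1}\{U_t(\uptau;\theta)\le s\} - \mathbf{1}\{U_t(\uptau;\theta)\le 0\}\bigr]\,ds \;-\; (\text{linear part}),
\end{equation*}
so that $\mathbb{R}_n^1$ becomes a centered sum after subtracting its conditional expectation, with Assumptions \ref{Assumption2}--\ref{Assumption3} guaranteeing that the recentering is well defined and that the conditional density $f_t(\cdot\mid\cdot)$ is uniformly bounded and Lipschitz. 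Second, these smoothness properties imply that the conditional $L^2$-norm of each summand is of order $t_\epsilon(t_\gamma+t_\epsilon)^{1/2}/\sqrt n$, so after summation the standard deviation $\sigma_n$ of the process is of order $[t_\epsilon(t_\gamma+t_\epsilon)^{1/2}]^{1/2}$ while the envelope of the function class is $\|F_n\|_\infty=\mathcal{O}(t_\epsilon/\sqrt n)$. Third, I would bound the bracketing numbers of $\mathcal{F}_n$ in $L^2(P)$ using the VC-subgraph property of halfspaces in finite-dimensional Euclidean space together with the compactness of $\Theta$ and $[\uptau_1,\uptau_2]$, giving $\mathsf{log}\, N_{[\,]}(\delta,\mathcal{F}_n,L^2(P))=\mathcal{O}(\mathsf{log}\,n + \mathsf{log}(1/\delta))$. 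Dudley's entropy integral then contributes a $\mathsf{log}^{1/2}n$ factor, and a Pollard--Talagrand bound of the form
\begin{equation*}
\mathbb{E}\Bigl[\sup_{\mathcal{F}_n}|\mathbb{R}_n^1|\Bigr] \;\le\; C\Bigl\{\sigma_n\sqrt{\mathsf{log}\, N(\sigma_n)} + \|F_n\|_\infty\,\mathsf{log}\, N(\sigma_n)\Bigr\}
\end{equation*}
combines these ingredients. The assumed inequality $(t_\gamma+t_\epsilon)^{1/2}/t_\epsilon\le\mathcal{O}(n^{1/4}/\mathsf{log}^{1/2}n)$ is precisely the condition that forces the subgaussian term $\sigma_n\sqrt{\mathsf{log}\,N(\sigma_n)}$ to dominate the Bernstein envelope term $\|F_n\|_\infty\,\mathsf{log}\,N(\sigma_n)$, producing the advertised rate $C\,n^{-1/4}\mathsf{log}^{1/2}n\cdot t_\epsilon(t_\gamma+t_\epsilon)^{1/2}$.

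\textbf{Main obstacle.} The principal difficulty, and the only genuine departure from the stationary template in Bhattacharya (2020), is controlling the bracketing entropy uniformly in $\theta\in\Theta$ while the regressors $X_t(\theta)$ are driven by a triangular array of near-integrated variates inherited from the first-stage IVX step. I expect to need a stochastic equicontinuity argument anchored in the invariance principle $n^{-1/2}X_{\lfloor nr\rfloor}\Rightarrow J_c(r)$ from Assumption \ref{Assumption1}, with the regressor rescaled by $\sqrt n$ so that the function class becomes a Lipschitz image of a VC-subgraph class of halfspaces. Exploiting the modulus-of-continuity estimate for the Ornstein--Uhlenbeck limit $J_c$ will then deliver a uniform-in-$\theta$ covering bound, after which the maximal inequality above applies exactly as in the stationary case.
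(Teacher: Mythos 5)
Your outline is essentially the argument behind this result: the paper states the lemma without proof, importing it verbatim from \cite{bhattacharya2020quantile} as an auxiliary result, and the only indication it gives of the underlying technique is precisely the ``bracketing and set covering approach'' plus a maximal inequality under bracketing conditions that your three steps (Knight's identity and recentering, $L^2$/envelope bounds, entropy bound feeding a Pollard--Talagrand inequality) reconstruct. One small calibration: your closing ``main obstacle'' about near-integrated regressors and the invariance principle is not part of this lemma, which is the stationary statement as in the source; the paper's own remark immediately afterwards explicitly defers those modifications for nonstationarity rather than building them into the proof of this bound.
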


\newpage

\begin{example}
In this example, we discuss the econometric framework of  \cite{chen2021quantile}, who consider quantile regression with generated regressors. Denote with $F_t \left( u | \boldsymbol{x}_t \right)$, the conditional distribution function of the error term  with continuous densities $f_t \left( u | \boldsymbol{x}_t \right)$ and a unique conditional $\uptau-$th quantile which are uniformly bounded away from 0 and $\infty$. It holds that  
\begin{align}
\sqrt{n} \left( \hat{ \boldsymbol{\theta} } - \boldsymbol{\theta}    \right) = n^{- 1/ 2} \sum_{t=1}^n \boldsymbol{r}_t ( \boldsymbol{\theta}) + o_p(1)   
\end{align} 
where $\boldsymbol{r}_t(\cdot)$ is a continuous function which satisfies $\mathbb{E} \left[ \boldsymbol{r}_t ( \boldsymbol{\theta}) \right] = 0$ and Var$\left[ \boldsymbol{r}_t ( \boldsymbol{\theta}) \right] = \boldsymbol{V}$ and thus the generated regressor has the form $
\frac{1}{n} \sum_{t=1}^n \hat{ \boldsymbol{x}}_t \hat{ \boldsymbol{x}}_t^{\prime} - \frac{1}{n}  \sum_{t=1}^n \boldsymbol{x}_t \boldsymbol{x}_t^{\prime} = o_p(1)$. 

Moreover, for any fixed $\uptau \in (0,1)$, we have the following results
\begin{align}
\sqrt{n} \left( \hat{ \boldsymbol{\beta} }(\uptau ) - \boldsymbol{\beta} (\uptau )   \right) \overset{ d }{ \to } \mathcal{N} \big( \boldsymbol{0} , \boldsymbol{\Omega} (\uptau )   \big).
\end{align}
Under the null hypothesis, it holds that
\begin{align}
\sqrt{n} \left( \boldsymbol{R} \hat{ \boldsymbol{\beta} }(\uptau ) - \boldsymbol{r} \right) \overset{ d }{ \to } \mathcal{N} \big( \boldsymbol{0} , \boldsymbol{R} \boldsymbol{\Omega} (\uptau )  \boldsymbol{R}^{\prime} \big).
\end{align}
Then, since we proved that $\hat{\boldsymbol{\Omega}} (\uptau )$ is a consistent estimator of $\boldsymbol{\Omega} (\uptau )$, by Slutsky's theorem,
\begin{align}
\mathcal{W}_T = T \left( \boldsymbol{R} \widehat{ \boldsymbol{\beta} }(\uptau ) - \boldsymbol{r} \right)^{\prime} \big[ \boldsymbol{R} \widehat{ \boldsymbol{\Omega} } (\uptau )  \boldsymbol{R}^{\prime} \big]^{-1} \left( \boldsymbol{R} \widehat{ \boldsymbol{\beta} }(\uptau ) - \boldsymbol{r} \right) \sim \chi^2_q.
\end{align}
Using the mean value expansion theorem we have that 
\begin{align}
\frac{1}{n} \sum_{t=1}^n \hat{ \boldsymbol{x} }_t u_t = \frac{1}{n} \sum_{t=1}^n \boldsymbol{x}_t u_t + \left\{ \frac{1}{n} \sum_{t=1}^n\nabla_{\delta} h \left( \boldsymbol{w}_t, \boldsymbol{\delta} \right)^{\prime} u_t \right\} \big( \hat{ \boldsymbol{\delta} } - \boldsymbol{\delta} \big) + o_p(1). 
\end{align}
Since we have that $\mathbb{E} \left( u | \boldsymbol{w} \right) = 0$ and $\mathbb{E} \left[  \nabla_{\delta} h \left( \boldsymbol{w}_t, \boldsymbol{\delta} \right)^{\prime} u_t \right] = o_p(1)$. Furthermore, since $\big( \hat{ \boldsymbol{\delta} } - \boldsymbol{\delta} \big) = o_p(1)$ and $\mathbb{E} \left(   \boldsymbol{x}_t u_t \right) = 0$, it follows that 
\begin{align}
\frac{1}{n} \sum_{t=1}^n \hat{ \boldsymbol{x} }_t u_t = \frac{1}{n} \sum_{t=1}^n \boldsymbol{x}_t u_t + o_p(1) = o_p(1).
\end{align}
Therefore, it can be proved that $\big( \hat{ \boldsymbol{\beta} } - \boldsymbol{\beta}_0 \big) = o_p(1)$ and the asymptotic normality of the OLS-GR estimator can be also established.  
In particular, we have that 
\begin{align*}
\sqrt{n} \big( \hat{ \boldsymbol{\beta} } - \boldsymbol{\beta}_0 \big)  
&= 
\left( \frac{1}{n} \sum_{t=1}^n  \hat{ \boldsymbol{x} }_t  \hat{ \boldsymbol{x} }_t^{\prime} \right)^{-1} \left\{ \frac{1}{\sqrt{n} } \sum_{t=1}^n \hat{ \boldsymbol{x} }_t \left[ \left( \boldsymbol{x}_t - \hat{\boldsymbol{x}}_t \right)^{\prime}\boldsymbol{\beta}_0 + u_t \right] \right\},
\\
&=  
\widehat{ \boldsymbol{H} }^{-1} \left\{ \frac{1}{\sqrt{n} } \sum_{t=1}^n \hat{ \boldsymbol{x} }_t \left[ \left( \boldsymbol{x}_t - \hat{\boldsymbol{x}}_t \right)^{\prime}\boldsymbol{\beta}_0 + u_t \right] \right\}
\end{align*}
where $\widehat{ \boldsymbol{H} } = \left( \frac{1}{n} \sum_{t=1}^n  \hat{ \boldsymbol{x} }_t  \hat{ \boldsymbol{x} }_t^{\prime} \right)$. 
\end{example}

\newpage

\bibliographystyle{apalike}
\bibliography{myreferences1}

\newpage

\end{document}